%% file: main.tex
\documentclass[acmsmall]{acmart}

\usepackage{multirow}
\usepackage{pgfplots}
\usepackage{algorithm}
\usepackage{algpseudocode}
\usepackage{color}
\usepackage{xcolor}
\usepackage{listings}
\usepackage{todonotes}
\usepackage{array}
\usepackage{xparse}
\usepackage{hhline}
\usepackage{graphicx,pifont}
\usepackage{wrapfig}

\setcopyright{cc}
\setcctype{by-nc-nd}
\acmDOI{10.1145/3797141}
\acmYear{2026}
\acmJournal{PACMSE}
\acmVolume{3}
\acmNumber{FSE}
\acmArticle{FSE014}
\acmMonth{7}
\acmSubmissionID{fse26maina-p262-p}
\received{2025-09-11}
\received[accepted]{2025-12-22}
\acmMonth{7} 

\usepackage{amsmath,amsfonts}

\usepackage{amssymb}

\newcommand*\circled[1]{\tikz[baseline=(char.base)]{
    \node[shape=circle,draw,inner sep=2pt] (char) {#1};}}

\definecolor{codegreen}{rgb}{0,0.6,0}
\definecolor{codegray}{rgb}{0.5,0.5,0.5}
\definecolor{codepurple}{rgb}{0.58,0,0.82}
\definecolor{backcolour}{rgb}{0.95,0.95,0.92}

\lstdefinestyle{mystyle}{
    backgroundcolor=\color{backcolour},   
    commentstyle=\color{codegreen},
    keywordstyle=\color{magenta},
    numberstyle=\tiny\color{codegray},
    stringstyle=\color{codepurple},
   basicstyle=\ttfamily\tiny,
    breakatwhitespace=false,         
    breaklines=true,                 
    captionpos=b,                    
    keepspaces=true,                 
    numbers=left,                    
    numbersep=3pt,                  
    showspaces=false,                
    showstringspaces=false,
    showtabs=false,                  
    tabsize=2
}

 \pdfoutput=1 

\pgfplotsset{width=10cm,compat=1.9}

\AtBeginDocument{%
  \providecommand\BibTeX{{%
    \normalfont B\kern-0.5em{\scshape i\kern-0.25em b}\kern-0.8em\TeX}}}

\usepackage{cleveref}
\crefformat{section}{\S#2#1#3}
\crefname{figure}{Figure}{Figures}
\crefname{table}{Table}{Tables}
\crefname{listing}{Listing}{Listings}
\crefname{theorem}{Theorem}{Theorems}
\crefname{thm}{Theorem}{Theorems}
\crefname{lemma}{Lemma}{Lemmata}
\crefname{equation}{Eqt.}{Eqts.}

\newif\ifDEBUG

\DEBUGtrue

\ifDEBUG
\newcommand{\SR}[1]{\todo[color=lightgray,inline]{Sazz says: #1}}
\newcommand{\AB}[1]{\todo[color=yellow,inline]{Saleh says: #1}}
\newcommand{\BO}[1]{\todo[color=pink,inline]{Bosu says: #1}}

\else 

\newcommand{\SR}[1]{}
\newcommand{\AB}[1]{}
\newcommand{\BO}[1]{}
\fi

    \newenvironment{boxedtext}
    {    
    \begin{center}
    
    \begin{tabular}{|p{0.96\linewidth}|}
    \hline
    }
    { 
    \\ \hline
    \end{tabular} 
    
    \end{center}
       }

\definecolor{coolblack}{rgb}{0.0, 0.18, 0.59}

\newcommand{\cmark}{\ding{51}}%
\newcommand{\xmark}{\ding{55}}%

\definecolor{background_gray}{gray}{0.85}
\definecolor{background_green}{rgb}{7,163,90}

\definecolor{MidnightBlue}{rgb}{0.1, 0.1, 0.44}

\newcommand{\code}[1]{{\tt #1}}

\newcolumntype{L}[1]{>{\raggedright\let\newline\\\arraybackslash\hspace{0pt}}m{#1}}
\newcolumntype{C}[1]{>{\centering\let\newline\\\arraybackslash\hspace{0pt}}m{#1}}
\newcolumntype{R}[1]{>{\raggedleft\let\newline\\\arraybackslash\hspace{0pt}}m{#1}}

\begin{document}

\title{GraphQLify: Automated and Type Safety-Preserving GraphQL API Adoption}

\author{Saleh Amareen}
\orcid{0009-0001-8490-161X}
\affiliation{%
  \institution{Wayne State University}
  \city{Detroit}
  \country{USA}
}
\email{saleh.amareen@wayne.edu}

\author{Arif Rahman}
\orcid{0009-0009-5895-998X}
\affiliation{%
  \institution{Wayne State University}
  \city{Detroit}
  \country{USA}
}
\email{arif.shariar@wayne.edu}

\author{Sazzadur Rahaman}
\orcid{0000-0002-1258-6470}
\affiliation{%
  \institution{University of Arizona}
  \city{Tuscon}
  \country{USA}
}
\email{sazz@cs.arizona.edu}

\author{Amiangshu Bosu}
\orcid{0000-0002-3178-6232}
\affiliation{%
  \institution{Wayne State University}
  \city{Detroit}
  \country{USA}
}
\email{amiangshu.bosu@wayne.edu}

\renewcommand{\shortauthors}{Amareen \em{et} al.}

\input{Sections/abstract}

\begin{CCSXML}
<ccs2012>
   <concept>
       <concept_id>10011007.10011006.10011073</concept_id>
       <concept_desc>Software and its engineering~Software maintenance tools</concept_desc>
       <concept_significance>500</concept_significance>
       </concept>
   <concept>
       <concept_id>10011007.10011006.10011041.10011047</concept_id>
       <concept_desc>Software and its engineering~Source code generation</concept_desc>
       <concept_significance>500</concept_significance>
       </concept>
 </ccs2012>
\end{CCSXML}

\ccsdesc[500]{Software and its engineering~Software maintenance tools}
\ccsdesc[500]{Software and its engineering~Source code generation}

\keywords{graphql, REST, apollo, API, schema, type-safety}

\maketitle
\sloppy

\input{Sections/introduction}

\input{Sections/related_work}

\input{Sections/design-choices}

\input{Sections/methodology}

\input{Sections/evaluation}

\input{Sections/results}
\input{Sections/discussion}

\input{Sections/limitations}

\input{Sections/conclusion}

\section*{ACKNOWLEDGEMENT}
\label{acknowledge}
This research is partially supported by the US National Science Foundation under Grant No. 2340389. 
The findings of this research do not necessarily reflect the views of the National Science Foundation. 

\section*{DATA AVAILABILITY}
\label{Data_availability}

The dataset, source code, and experimental results are publicly available on GitHub at \href{https://github.com/WSU-SEAL/graphqlify-fse-2026}{https://github.com/WSU-SEAL/graphqlify-fse-2026}.

\bibliographystyle{ACM-Reference-Format}
\bibliography{references}

\end{document}
\endinput

%% file: Sections/abstract.tex
\begin{abstract}
GraphQL provides a schema-based, strongly typed query language that enables highly efficient client-server communication. This paper introduces \textit{GraphQLify}, an automated framework designed to migrate existing REST APIs to GraphQL. Unlike prior approaches that rely on relational databases, resource description frameworks (RDF), or machine-parsable specifications, \textit{GraphQLify} leverages static source code analysis for precise type inference. This novel technique generates GraphQL schemas that guarantee end-to-end type safety, preserving a core advantage of adopting GraphQL. Furthermore, existing migration tools typically generate separate adapter servers, which introduce performance overhead via dynamic request binding and network latency. \textit{GraphQLify} eliminates this by generating an embedded server that directly invokes the underlying API code, significantly improving performance. We evaluated \textit{GraphQLify} on 834 APIs across nine popular open-source projects, where it successfully converted 100\% of the APIs with zero type mismatches. In contrast, the current state-of-the-art tool, OASGraph, exhibited a 3.5\% failure rate and a 42\% type mismatch rate on the same dataset. Finally, our performance evaluation demonstrates that for workflows requiring five sequential API calls, clients using \textit{GraphQLify} reduce data fetching time by a factor of 2 to 4 compared to their REST counterparts.

\end{abstract}

%% file: Sections/introduction.tex
\section{Introduction}
\label{sec:introduction}
Service-oriented architecture (SOA) and Microservice-oriented architecture (MOA) are two popular architectures for developing software-as-a-service (SaaS) applications~\cite{quina2021quality}. In these client-server oriented architectures, clients communicate with interconnected application services via Application Programming Interfaces (APIs) where REpresentational State Transfer (REST) and Simple Object Access Protocol (SOAP) APIs have long dominated the industry~\cite{replacerest,controlledexperiment, quina2021quality}. Despite their wide adoption, these API choices have their disadvantages. For instance, REST clients interact with various server resources at the granularity of endpoints. These endpoint definitions are rigid since they return fixed data structures and often result in clients receiving more data than necessary (over-fetching) or performing multiple endpoint calls to obtain the desired data (under-fetching).

GraphQL offers a powerful alternative to traditional APIs by resolving critical data-fetching inefficiencies. Architectures like REST and SOAP often necessitate a cascade of slow, dependent network requests for a client to assemble a complete data view. GraphQL confronts this directly by empowering clients to request all necessary nested and related resources in a single, efficient server trip~\cite{graphql_spec}.
This powerful querying is rooted in GraphQL's strongly-typed schema, which acts as a definitive contract between the client and server. By defining every available data type, the schema enforces strict validation on all operations~\cite{graphql_spec}. This inherent type safety guarantees that clients receive data in the exact structure they expect, preventing a typical class of integration errors among REST and SOAP API clients.

A recent study found GraphQL offers significant performance advantages over REST and SOAP, including faster response times (up to 5 times improvement) and data transfer size reduction (up to 38 times smaller) for complex queries~\cite{quina2023systematicmapping}. As GraphQL offers a more efficient, flexible, and powerful way to query and manipulate APIs, it is currently adopted in well-known technology companies such as GitHub, IBM, Amazon, Twitter, Airbnb, Shopify, and Netflix~\cite{controlledexperiment, replacerest, graphqlformaldef2, evolutionarytesting, graphqlfoundation, ontology}. 
A recent Gartner report predicts that at least 50\% of enterprises will be utilizing GraphQL in production by 2025~\cite{ibm_gartner_blog_post}. 

However, developers adopting or migrating existing APIs to GraphQL often encounter several challenges. 
First, GraphQL introduces a new approach to managing client-server interactions compared to traditional architectures such as REST. REST is resource-based, while GraphQL is query-based. This fundamental difference imposes a steep learning curve, requiring software organizations to invest time, effort, and resources. For example, during their adoption of GraphQL, GitHub developers found GraphQL difficult to learn, resource-demanding, and time-consuming~\cite{replacerest}. 

Second, since GraphQL is strongly typed, integrating diverse data sources, such as existing APIs, databases, and applications, requires type conversions. Developers may resort to upcasting permissive types such as $Object$ or $String$ to ease GraphQL adoption. However, improper type conversion during serialization/deserialization or dynamic class loading can introduce bugs or even security defects~\cite{yazdipour2020github,mcfadden2024wendigo}.

 Third, a GraphQL implementation warrants additional API design, implementation, testing, and maintenance resources, especially as it often operates in parallel with existing APIs to maintain backward compatibility~\cite{graphqlwrappers}. Hence, scaling the adoption of GraphQL to hundreds or possibly thousands of services is challenging, especially if performed manually. Due to these challenges, a resource-constrained organization may defer adoption despite knowing GraphQL's advantages.

To address these adoption challenges, researchers and practitioners have proposed automation approaches to aid the migration of existing APIs to GraphQL~\cite{farre2019graphql,taelman2018graphql,ontology,morphgraphql,ugql,prisma,hasura, postgraphile,graphqlmesh,dgraph,graphqlwrappers,swaggertographal,stepzen}. These solutions can be organized into three categories. 
The \textit{first category} targets creating a GraphQL server for relational databases as an alternative query language to fetch data~\cite{prisma,hasura,postgraphile,graphqlmesh,dgraph}. 
These techniques map tables and columns directly to GraphQL types and fields, where resolvers perform Create, Read, Update, Delete (CRUD) operations.  While such automatic conversion to GraphQL is convenient, they are not code-aware, i.e., they do not consider business logic, input validation, or data integrity checks. Hence, their applicability is limited.
The \textit{second category} targets generating a GraphQL server for the Resource Description Framework (RDF) graph database, using ontologies\footnote{Ontologies are formal descriptions of knowledge as a set of concepts within a domain and the relationships that hold between them.} to define the structure and relationships within the RDF data and semantic mappings to align these definitions with GraphQL types and fields to enable structured queries~\cite{farre2019graphql, taelman2018graphql,ontology,morphgraphql,ugql}. However, these approaches require manual schema extraction by domain experts.
Solutions~\cite{swaggertographal,graphqlwrappers,stepzen} from the \textit{third category} aim to automate the translation of traditional APIs such as REST to GraphQL, utilizing dynamic analysis of machine-parsable API specifications such as Swagger~\cite{swagger}, OpenAPI~\cite{openapi}, or JSON responses of endpoints. However, these techniques either necessitate maintaining third-party representations of APIs or performing dynamic analysis, which can lead to a lack of type safety. For example,  since the OpenAPI specification supports a limited set of data types for simplicity and interoperability across platforms, tools using it for translation perform best-effort dynamic detection of data types based on their response JSON payloads. Additionally, the separate GraphQL servers generated by these techniques work as adapters to existing APIs and, therefore, incur performance overhead due to dynamic request binding and network latency.

\vspace{3pt}
\noindent \textbf{Novelty:} We introduce a novel approach that automatically generates GraphQL schemas by statically analyzing the source code of existing APIs. This technique uses type inference to ensure end-to-end type safety, a primary advantage of GraphQL over REST~\cite{quina2023systematicmapping}. To the best of our knowledge, this is the first work to do so.
To demonstrate our method, we present GraphQLify (pronounced "graph-qualify"), an automated framework for Java-based API migration. GraphQLify automatically generates a complete GraphQL schema and its resolvers, seamlessly migrating an API to GraphQL without requiring any code modifications. By injecting the generated code directly into the target project, our framework avoids the performance overhead of inter-application communication. We chose Java for our prototype due to its prevalence in enterprise applications~\cite{enterprise-list}.
\textit{GraphQLify} is compatible with various API implementations and requires no code changes, facilitating minimal-effort GraphQL adoption at scale. It also supports automatic API schema and resolver updates in the source code as the application evolves, minimizing maintenance efforts.




\vspace{3pt}
\noindent \textbf{Evaluation:} To assess its performance, we evaluated the performance of \textit{GraphQLify} on 834 APIs from nine well-known Open-Source Software (OSS) projects. \textit{GraphQLify} completes end-to-end migration in all cases. We also manually tested 50 APIs from six projects by running their respective applications and sending client requests. We observed correct responses in all cases.
We compared \textit{GraphQLify}'s performance against \textit{OASGraph}, an open-source solution to generate GraphQL schema from OpenAPI-based REST servers. 
Out of 196 APIs from five OSS projects,  OASGraph failed in {3.5\%} cases compared to 0\% failures from \textit{GraphQLify}. Additionally, while \textit{GraphQLify}'s auto-generated schema matched original types in 100\% cases, we found 42\% mismatches for OASGraph-generated ones. Finally,  our evaluation suggests that for application workflows requiring five sequential API calls, clients can reduce data fetching time by a factor of 2 to 4 by using the GraphQLify-generated APIs over REST counterparts.

\vspace{2pt}
\noindent \textbf{Contributions:} We summarize our key research contributions as follows:

\begin{itemize}
   \item We have designed and validated, \textit{GraphQLify}, the first static analysis-based tool for automated REST-to-GraphQL  conversion. 
   \item The success of our approach establishes a new, empirically superior method for ensuring type safety in REST to GraphQL API conversion.

   \item We have designed an `intermediate type definition' that functions as a systematic translator between REST and GraphQL architectures. This translation is managed by a suite of algorithms that perform the necessary specification conversion and data parsing.

    \item To promote adoption, we have released plugins to make \textit{GraphQLify} compatible with REST, OpenAPI~\cite{openapi}, and Spring~\cite{spring}. We also provide guidelines on how to build plugins targeting other programming languages or technologies.
       \item We make \textit{GraphQLify} and its evaluation data publicly available at \href{https://github.com/WSU-SEAL/graphqlify-fse-2026}{GitHub} to facilitate adoption and further research in this direction.
\end{itemize}

\vspace{2pt}
\noindent \textbf{Organization:} 
The remainder of this paper is organized as follows. Section~\ref{sec:related_work} overviews GraphQL and discusses closely related works. 
Section~\ref{sec-overview-workflow} provides a brief overview of GraphQLify and its workflow. 
Section~\ref{sec:design-considerations} details research challenges in designing GraphQLify and our mitigation strategies.
Section~\ref{sec:methodology} and ~\ref{sec:evaluation} detail the components of \textit{GraphQLify} and present the results of its evaluation, respectively.
Section~\ref{sec:discussion} provides recommendations for developers, researchers, and framework builders.
Finally, section~\ref{sec:conclusion} concludes the paper.

%% file: Sections/related_work.tex
\section{Background}
\label{sec:related_work}
The following subsections provide a brief overview of GraphQL and closely related works.

\subsection{GraphQL Core Concepts}
Let's consider a scenario where a developer wants to obtain all the communication on a pull request on GitHub. Since such communication can occur at multiple places (i.e., at the issue level, review level, or changed files), we require three REST requests (under-fetching). Moreover, each of these requests will return unnecessary and duplicate information (over-fetching). On the other hand, we can obtain the exact information using one precise GraphQL query listed in Figure~\ref{fig:queryexample}, thereby significantly reducing data fetching latency by a factor of 2 to 3.  To comprehend GraphQL,  we need to understand the following four core concepts.

\noindent \emph{1. GraphQL Schema}: A \textit{GraphQL schema} defines a type system of operations, data objects, and types representing the API. It also establishes a contract between the client and the server, detailing exactly what queries can be executed and the response format.

\begin{figure}
  \centering
  \vspace{-14pt}
  \includegraphics[width=0.7\linewidth]{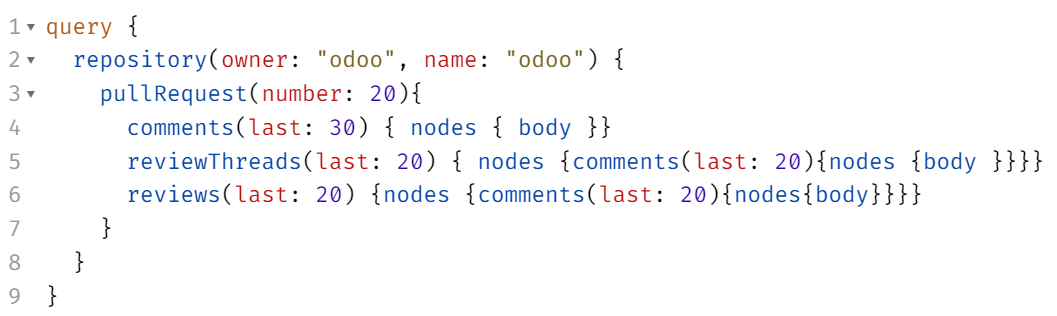}
  \caption{A GraphQL query to obtain all the discussions on a pull request. }
  \label{fig:queryexample}  
  \vspace{-12pt}
\end{figure}

\vspace{3pt}
\noindent \emph{2. GraphQL Query language} GraphQL's \textit{query language} allows clients to build requests to interact with various resources on the server. It supports three types of operations:  i) queries -- to let clients retrieve data from the server; ii)  mutations-- to support data modification; and iii) subscriptions-- to enable clients to subscribe to real-time events from a server.

\vspace{3pt}
 \noindent \emph{3. GraphQL Resolvers}: Resolver functions are handlers on the server side to fulfill the runtime execution of client requests to fetch, modify, or subscribe to resources. Resolvers may integrate with various data sources such as databases, REST services, or other APIs.

\vspace{3pt}
 \noindent \emph{4. GraphQL Lifecycle}:
GraphQL's query language parsing is built into the framework. Hence, an API designer defines a GraphQL schema and associated resolvers. A GraphQL request lifecycle starts when a client queries a GraphQL server endpoint via a query, mutation, or subscription (Figure~\ref{fig:queryexample}). The server validates the request against its schema. If invalid, the server returns an error, and no execution occurs. Otherwise, the server maps the requests or their nested fields to their resolvers for execution. Each resolver passes the request down to the next level of resolvers until no more requested fields remain. Lastly, the resolver returns the data to the GraphQL endpoint, which, in turn, returns the response, typically in JSON format.

\subsection{Related Works}
\noindent \textbf{Empirical Evaluation of GraphQL: } To characterize the benefits and challenges of GraphQL, Gleison \textit{et al.}~\cite{migratingtographql_apracticalassessment} conducted a practicality assessment of migrating from REST to GraphQL on seven systems. Their results indicate that GraphQL reduces the number of API calls and the overall payload size compared to REST. Sri \textit{et al.}~\cite{replacerest}  surveyed 38 GitHub employees to understand GitHub's GraphQL adoption process. They also assessed the benefits and challenges of REST and GraphQL in terms of efficiency and application feasibility, concluding that each protocol has its own benefits and weaknesses. Implementing a GraphQL server is difficult due to learning curves and requires a lot of time and resources~\cite{replacerest}. On the other hand, a controlled experiment by  Brito and Valente with 22 students found client-side application integration with a GraphQL API easier than with REST~\cite{controlledexperiment}.  They compared the implementation of remote service queries using REST and GraphQL. They found that GraphQL required less effort, with a median implementation time of 6 minutes compared to 9 minutes for REST. Additionally, implementing REST queries became more challenging with complex endpoints and multiple parameters. Participants without experience found GraphQL easier to implement and understand~\cite{controlledexperiment}. 
Wittern \textit{et} al.'s empirical investigation of over 8K GraphQL schemas mined from GitHub suggests that the majority of APIs have security issues that can be exploited using complex queries~\cite{wittern-2019}. 
Another direction focuses on assessing execution strategies and costs of GraphQL queries~\cite{cha-FSE-2020,mavroudeas-ASE-2021,roksela2020evaluating}.

\begin{table}[h]
\caption{Summary of comparison of \textit{GraphQLify} with prior art. Here, \cmark{},  \xmark{} indicate yes, and no, respectively.} 
\label{tab:tool-grid}
\setlength{\tabcolsep}{1pt}
\resizebox{0.7\linewidth}{!}{
\begin{tabular}{@{}lcccc@{}}
\toprule
Tools  & \rotatebox[origin=c]{60}{
\begin{tabular}{l}
     Proxy Adapter  \\
     Server Free? 
\end{tabular}} & \rotatebox[origin=c]{60}{Type Preserving?} & \rotatebox[origin=c]{60}{End-to-End?} &\rotatebox[origin=c]{60}{Tech. Agnostic?} \\ \midrule
MorphGraphQL~\cite{morphgraphql} &  \xmark{}  & \xmark{} & \xmark{} & \xmark{}                           \\
UltraGraphQL~\cite{ugql} &  \xmark{} & \xmark{} & \xmark{} &  \xmark{}                            \\
OBG-gen~\cite{ontology} & \xmark{} &  \xmark{} & \xmark{} & \xmark{}                             \\
StepZen~\cite{stepzen} & \xmark{}  & \xmark{} & \xmark{}&  \cmark{}                             \\
OASGraph~\cite{graphqlwrappers}  & \xmark{} &  \xmark{} & \cmark{} &  \xmark{}                            \\ 
GraphQLify (Ours)    &     \cmark{}  &   \cmark{}   & \cmark{} &   \cmark{}  \\ \bottomrule
\end{tabular}
}

    \end{table}

\vspace{2pt}
\noindent \textbf{Automated Tools to Support GraphQL Adoption:} 
Earlier automated GraphQL adoption approaches focused on generating GraphQL schemas and resolvers~\cite{ontology, morphgraphql, graphqlwrappers, ugql, farre2019graphql, taelman2018graphql}. \textit{Morph-GraphQL}~\cite{morphgraphql} requires semantics mapping as input and produces a GraphQL server as output. \textit{OBG-gen}~\cite{ontology} is an ontology-based solution that requires both ontology and semantics mapping as input to generate a GraphQL server. Both require domain expertise in declarative mapping languages and tools to create ontologies and semantics mapping rules, or manual setup of the prerequisites by domain experts. UltraGraphQL~\cite{ugql} does not require ontologies or semantics mappings but requires an RDF schema of SPARQL endpoints since RDF is schema-less. It also defaults all literals to \texttt{String}, leaving it susceptible to security injection attacks. \textit{OASGraph}~\cite{graphqlwrappers} targets REST-like APIs based on their machine-readable OpenAPI specification definitions~\cite{openapi}. However, it only works with OpenAPI and does not support any other format, and therefore, cannot support many existing non-OpenAPI applications. Moreover, since runtime user queries in \textit{OASGraph} are intercepted by a separate GraphQL adapter server (before mapping and delegating them to their respective REST-like APIs)~\cite{graphqlwrappers}, they become subject to network latency and mapping overhead. Table~\ref{tab:tool-grid} compares existing solutions against ours.
In terms of automated testing, researchers have focused on testing GraphQL servers using fuzzing~\cite{belhadi2024random}, genetic algorithms~\cite{evolutionarytesting}, queries harvested from production servers~\cite{zetterlund2022harvesting}, and access control evaluation using taint analysis~\cite{lambers2024taint}.


%% file: Sections/design-choices.tex
\section{\textit{GraphQLify}  Overview} 
\label{sec-overview-workflow}
We envision an automated solution for developers who want to add GraphQL support to existing APIs.
An ideal solution for this use case should i) handle the entire conversion process with minimal manual intervention; ii) preserve or enhance the type safety of the original API; and iii) offer simple and minimal effort setup.
Ideally, such a tool would take existing API code as input and automatically generate a fully functional GraphQL equivalent. 
Figure~\ref{fig:graphqlifySystem} provides a high-level overview of our design guided by this vision.

\subsection{Workflow} 
 \textit{GraphQLify} automatically generates a GraphQL schema from an API source code, where it takes an API source code as input and produces three artifacts: \textit{i)} a definition model, \textit{ii)} a GraphQL schema, and \textit{iii)} GraphQL resolvers. \textit{GraphQLify} has three modules, \textit{i)} \textit{Processor} (\S\ref{sec:processor}), \textit{ii)} \textit{Translator} (\S\ref{sec:translator}), and \textit{iii)} \textit{Generator} (\S\ref{sec:generator}). \textit{Processor} has three sub-modules, a Language-specific \textit{AST Parsers}, a Framework Semantic-aware \textit{API Processor}, and a \textit{Plugin Loader}. The Framework semantic-aware \textit{API processor} parses API source code and processes the API definitions using a framework-specific \textit{Plugin} and a Language-specific \textit{AST Parser}. We can add support for a new language by integrating its existing AST parser with \textit{API processor} through an adapter interface.
  An AST parser outputs a framework-specific representation of the API definition that can be processed by the plugins to create a definition model (i.e., \textit{Def Model}). Def Model is an intermediate representation we designed for the data types of the original API that need to be translated to GraphQL (\ref{fig:graphqlifyDefinition}). Subsequently, the \textit{Translator} component accepts this \textit{Def Model} as input to generate the corresponding GraphQL schema. The \textit{Generator} consumes both \textit{Def Model} and the GraphQL schema to produce resolvers.

\begin{figure*}
  \centering
  \includegraphics[width=\textwidth]{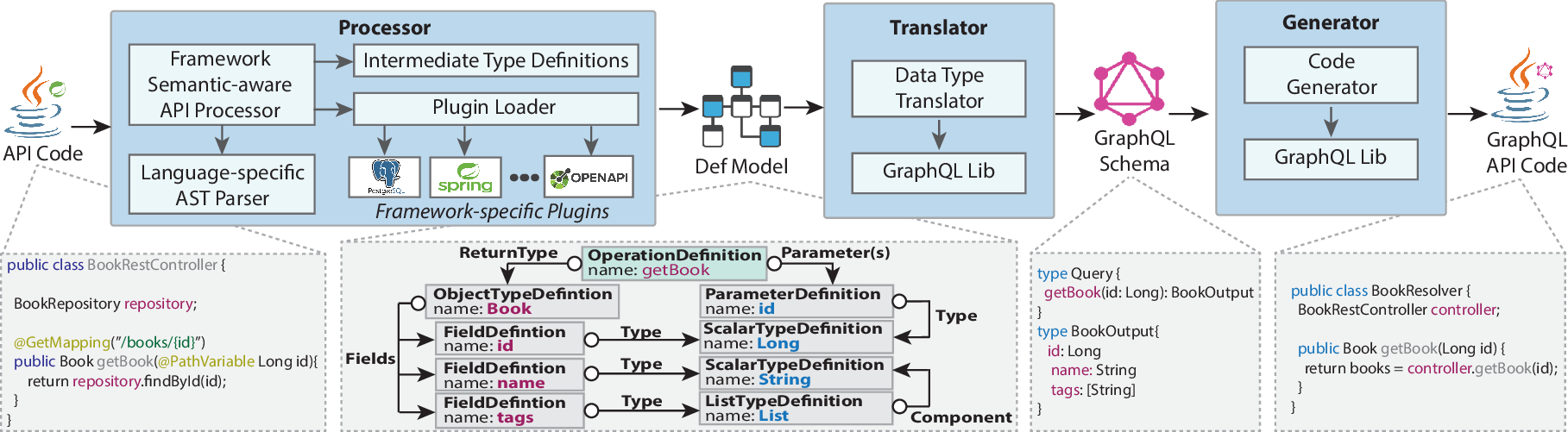}
  \caption{\textit{GraphQLify} Workflow. Plugins managed by the \textit{GraphQLify} processor identify APIs written with different frameworks within the input code and produce a definition model. Then, the  \textit{GraphQLify} translator translates the model to a GraphQL schema, which is used by the generator to generate API code written in GraphQL. Here, dashed arrows indicate \textit{uses} relationship.}
  \label{fig:graphqlifySystem}  
  \end{figure*}

\input{Sections/code-example}

\section{Design Goals and Challenges}
\label{sec:design-considerations}
The following subsections provide a detailed description of our design choices to implement \textit{GraphQLify} and corresponding research challenges encountered.

\subsection{Design Choices}
\label{sec:design-choices}
Our considerations and design choices to implement \textit{GraphQLify} are as follows.

\vspace{2pt}
\noindent \textit{1. Static Analysis-Based Processing} 
Static analysis-based tools support more versatile use cases compared to their dynamic-analysis counterparts because of their low setup cost~\cite{DBLP:conf/uss/MachirySCSKV17, DBLP:conf/ccs/RahamanXASTFKY19}. 
Hence, we employed the static analysis-based paradigm to design \textit{GraphQLify} to enable its potential to be used to convert APIs written in diverse technologies seamlessly. Specifically, \textit{GraphQLify} uses AST-based code analyses to identify existing API semantics and data types.

\vspace{2pt}
\noindent \textit{2. Preserving types by design}: Since type-safety is a key motivation for building microservice-based enterprise applications using GraphQL instead of REST~\cite{quina2023systematicmapping}, \textit{GraphQLify} prioritizes type preservation. Moreover, lack of type preservation can open up new attack surfaces~\cite{yazdipour2020github,mcfadden2024wendigo} and interoperability issues~\cite{ontology,seifer2019empirical,ugql}. \textit{GraphQLify} utilizes static code analysis to automatically recover types from API code to produce strongly typed definitions and comply with GraphQL's type-safety premise.

\vspace{2pt}
\noindent \textit{3. Decoupling type identification and mapping}: There are differences between programming language type systems and GraphQL specification~\cite{graphql_spec}. For example, GraphQL does not support sets, multiple or void return types, multi-dimensional tuples, maps, or dictionaries. On the other hand, GraphQL has separate types to indicate the distinction between input and output data. To address that, we implement an intermediate-type system (\ref{fig:graphqlifyDefinition}) to decouple GraphQL types from the programming languages in which the code is written.

\vspace{2pt}
\noindent \textit{4. Extensibility with the Plugin System}: \textit{GraphQLify} is designed to be a technology-agnostic system to enable migrating code from any programming language or web API technology. It uses a plugin-based design principle to enable extensions to various programming languages or API technologies (i.e., Spring~\cite{spring}, OpenAPI~\cite{openapi}).

\subsection{Key Challenges and Their Mitigations}
Key research challenges towards realizing the \textit{GraphQLify}'s design goals (\ref{sec:design-choices}) and our mitigation strategies are as follows.
    
\vspace{2pt}
\noindent \textbf{ $\triangleright$ Challenge 1: Accurate API Detection and Processing.} 
Given a source code repository as input, the first problem we need to tackle is detecting existing APIs and processing them.
The diversity of syntactic and semantic differences in existing technologies makes this a challenging problem. 
For example, Java's JAX-RS specification uses annotations e.g., \texttt{@GET}, \texttt{@PUT}, \texttt{@POST}, and \texttt{@DELETE} to define REST APIs~\cite{jaxrs}, where Spring uses \texttt{@GetMapping}, \texttt{@PutMapping} \texttt{@PostMapping}, \texttt{@DeleteMapping} to suggest GET, PUT, POST and DELETE methods, respectively~\cite{spring}. Additionally, the semantic meaning of an API code depends on the underlying framework. For example, \texttt{ResponseEntity<T>} in Spring supports generic types, whereas JAX-RS does not.

\textbf{$\checkmark$ Mitigation:} Even basic static analysis tasks are undecidable -- for instance, determining if a variable will ever be assigned a specific value~\cite{DBLP:journals/loplas/Landi92}. Thus, utilizing domain knowledge is critical in designing practical tools. Hence, \textit{GraphQLify} leverages the insight that most frameworks use API definitions to declare and enforce framework-specific functionalities. These API definitions work as a separation layer between the framework and the developer code. Therefore, \textit{GraphQLify} uses framework-relevant method definitions to extract API semantics, such as name, parameters, output types, and API implementations. Since method definitions are sufficient for API detection and processing, \textit{GraphQLify} skips analysis of implementations, which also helps avoid unsoundness or imprecision inherent to aliases, pointers, or data flow analysis~\cite{DBLP:conf/uss/MachirySCSKV17}. It decouples framework-specific, semantics-aware processing from the original design via a plugin-based extensibility mechanism.

\vspace{2pt}
\noindent \textbf{$\triangleright$ Challenge 2: Incompatibility Between REST and GraphQL Type System.} To handle incompatibility between programming language types and GraphQL types in a principled manner, \textit{GraphQLify} designs an intermediate type system.
\emph{Processor} maps type definitions from input APIs to the intermediate definitions in \textit{GraphQLify}'s intermediate type system, which is used by \emph{Translator} to generate a GraphQL schema. These intermediate types are also used by \emph{Generator} to map types from the GraphQL schema to the programming language while generating the \textit{resolvers}. However, designing an intermediate type system to maintain type safety during type conversion between incompatible types is challenging.
First, it is essential to establish a structured contract that all types in the system adhere to. This contract must support polymorphism and preserve hierarchical structures to ensure lossless type transformation. It should also address type mismatches between programming languages and the GraphQL schema.

\textbf{$\checkmark$ Mitigation:} We studied popular programming languages (i.e., Java, Python, JavaScript, PHP) for web development. We studied their basic (i.e., primitive or built-in) type systems to create similar representations in our type system. Specifically, the design of our intermediate type-system was guided by the principle that it must be: \emph{i)} \emph{type-safe} avoiding under- or over-casting (e.g., Double to Float or BigInt to Long), \emph{ii)} \emph{lossless} preventing loss of precision or mismatching types (e.g., mapping a DateTime to String), \emph{iii)} \emph{minimal and portable} defining only the essential types to allow portability to other programming languages despite our prototype implementation in Java. \emph{iv)} \emph{explicit over implicit} requiring explicit type definitions instead of relying on ambiguous mappings (i.e., mapping a \texttt{Address} complex type to an \texttt{Object}),\emph{v)} \emph{extensible} supporting domain-specific types via user-defined types and custom scalars, and \emph{vi)} \emph{preserving structure} ensuring that metadata such as type relationships, hierarchy, and constraints are not lost when moving between systems, such as supporting generic, nullable, and dictionary types.

\begin{figure}
\begin{lstlisting}[caption=Exampe of Java generic type arguments, label={lst:java_generics}]
    public class Response<T>{
        T payload;
        // other fields, getters, setters
    }
    
    // Simplified Java REST API Declarations
    @GET Response<User> getUser(int id);
    @GET Response<Address> getUserAddress(int id);
    @GET Response<Account> getUserAccount(int id);
\end{lstlisting}
\end{figure}

\noindent \textbf{$\triangleright$ Challenge 3: Identification of Types for Generics and Data Wrappers.} 
Source APIs may contain generic types. However, unlike Java or TypeScript, GraphQL does not support generic type arguments, which poses a challenge for static analysis-based translation. Examples include unwrapping REST-specific response wrappers such as \texttt{Response<T>}, where developers can pass different \textit{type} arguments in different APIs. Capturing different types distinctly is vital to ensure type-preserving API translation.

\textbf{$\checkmark$ Mitigation:} 
To preserve type safety during API conversion, GraphQLify must resolve generic wrappers, such as Response<T>. It operates on the insight that maintaining the specific type arguments used in these wrappers is sufficient to replicate the original code's type safety. For instance, consider the APIs in Listing~\ref{lst:java_generics} returning \textit{Response<User>}, \textit{Response<Address>}, and \textit{Response<Account>}. Even if these arguments have subclasses, preserving these top-level types is enough.
GraphQLify uses static analysis to detect each unique instantiation of a generic type and then constructs a corresponding static type definition. For the example in Listing~\ref{lst:java_generics}, it generates three distinct GraphQL types \textit{DataOfUser}, \textit{DataOfAddress}, and \textit{DataOfAccount}. This method ensures the converted API is compatible with GraphQL’s static typing, passes schema validation, and preserves type safety.

\vspace{2pt}
\noindent \textbf{$\triangleright$ Challenge 4: Namespace Conflicts.} 
\label{sec:language-features}
Programming languages allow duplicate names in different namespaces, where \textit{namespaces} are used to group related code entities and can be seen as a scope within which names are defined. However, GraphQL does not support namespaces. A GraphQL schema only separates read, write, and subscription operations into their respective root nodes. Thus, handling duplicate names for different entities poses a major challenge for semantic-preserving API conversion.
The issue is further exacerbated for object-oriented languages supporting method overloading or overriding, where such APIs use the same names.

\textbf{$\checkmark$ Mitigation:} Similar to as generic types, we utilize static analysis to identify interfaces, classes, and methods with duplicate names in different namespaces.
Then, we apply renaming strategies that incorporate the names of input arguments, return types, and namespaces to create different identities in the GraphQL domain. For example, to distinguish two \texttt{User} types from two different namespaces (i.e., \texttt{com.w.User} and \texttt{com.z.User}), \textit{GraphQLify} will use their fully-qualified names in the GraphQL schema, such as \texttt{com\_w\_User} and \texttt{com\_z\_User}. Similarly, to handle method overloading, it will append method return types and arguments to create separate identities. For example, it will rename the following two methods, i.e., \texttt{User get (Integer id)} and \texttt{User get (String name)}, to \texttt{getUsingIntegerReturnsUser} and \texttt{getUsingStringReturnsUser}.

%% file: Sections/code-example.tex
\subsection{Generated Artifacts and The Lifecyle of a GraphQLify-Generated API  }

Listing \ref{lst:rest-example} shows the source code of a REST API implemented in Spring REST~\cite{springrest}, one of the most popular frameworks to build REST services in Java-based enterprise applications. Here, the controller defines two REST endpoints: 1) \emph{getArticle}: to retrieve an existing blog article with an associated ID parameter; and 2) \emph{addArticle}: to create a new article. 
\begin{lstlisting}[caption=REST controller source code, label={lst:rest-example}]
    //ArticleController.java
    @RestController
    @RequestMapping("/api/articles")
    public class ArticleController {
        @Autowired
        private ArticleService articleService;

        @GetMapping("/{id}")
        public ResponseEntity<Article> getArticle(@PathVariable Long id) {
            Article article = articleService.getArticle(id);
            return new ResponseEntity< >(article, HttpStatus.OK);
        }
        
        @PostMapping
        @PreAuthorize("hasRole('USER')")
        public ResponseEntity<ArticleResponse> addArticle(@Valid @RequestBody ArticleRequest articleRequest, @CurrentUser UserPrincipal currentUser) {
            ArticleResponse articleResponse = articleService.addArticle(articleRequest, currentUser);
            return new ResponseEntity<>(articleResponse, HttpStatus.CREATED);
        }        
    }
\end{lstlisting}

Migration with \textit{GraphQLify} starts by simply integrating \textit{GraphQLify} into a project's build system (i.e., Maven~\cite{maven}). After that, during the build process, it automatically generates a complete GraphQL API layer from the existing REST services. Specifically, it creates the following key artifacts: 
\begin{itemize}
    \item A foundational GraphQL schema that defines the API's structure.
    \item A set of corresponding data model classes for all REST parameters and return types, which are placed in the {\tt graphqlify.generated.graphql.model} package.
    \item Three essential Java classes to manage runtime logic: \emph{QueryResolver.java} and \emph{MutationResolver.java} contain the definitions for GraphQL queries and data-modifying mutations, respectively (Listing~\ref{lst:resolver-example}). ObjectMapper.java handles the critical data mapping between the original REST types and their new GraphQL counterparts.
\end{itemize}
Once deployed, the fully-functional GraphQL API, including its schema and resolvers, becomes accessible at a standard GraphQL endpoint (e.g., /graphql).

\begin{lstlisting}[caption=Auto-generated resolvers and object mappers, label={lst:resolver-example}]
  // QueryResolver.java    
   public class QueryResolver implements GraphQLQueryResolver {
      ArticleController articleController;
      public Article getArticle(Long id, DataFetchingEnvironment dataFetchingEnvironment) throws Exception {
        com.sopromadze.blogapi.model.Article response = articleController.getArticle(id).getBody();
        return ObjectMapper.mapGetArticle(response);
    }
    // MutationResolver.java  
   public class MutationResolver {
      ArticleController articleController;
      public ArticleResponse addArticle(PostRequestInput postRequest, DataFetchingEnvironment dataFetchingEnvironment) throws Exception {
        ArticleRequest articleRequest_ = __mapPostRequestInput(postRequest, new java.util.HashMap<>());
        InjectedParameter currentUser_injected = new InjectedParameter("com.sopromadze.blogapi.controller.ArticleController", "addArticle", "currentUser", "com.sopromadze.blogapi.security.UserPrincipal", dataFetchingEnvironment);
        com.sopromadze.blogapi.security.UserPrincipal currentUser_mapped = injectedParameterMapper.mapInjectedParameter(currentUser_injected);
        response = articleController.addArticle(articleRequest_, currentUser_mapped).getBody();
        return ObjectMapper.mapAddArticleResponse(response);
    }
}

  //ObjectMapper.java
  import  graphqlify.generated.graphql.model.Article;
  import graphqlify.generated.graphql.model.ArticleResponse;
    public class ObjectMapper {
     public static Article mapGetArticle(com.sopromadze.blogapi.model.Article response) {
      Article graphql_response = new Article();
     //auto-generated mapping
     ....
     return graphql_response;
     }

    public static ArticleResponse mapAddArticleResponse(com.sopromadze.blogapi.payload.ArticleResponse response) {
      ArticleResponse graphql_response = new ArticleResponse();
     //auto-generated mapping
     ....
     return graphql_response;
     }
    }

\end{lstlisting}
When a client sends a request to the GraphQL endpoint, the underlying `graphql-java` library intercepts and processes it. The framework first validates the query against the schema and then executes the corresponding resolver method to fetch the data. For instance, consider the following:\\ {\tt 
query \{
  getArticle(id: "10") \{
    author
    title
  \}
\}}

To fulfill this request, `graphql-java` invokes the \emph{`getArticle`} method within the generated `QueryResolver.java` (Listing~\ref{lst:resolver-example}, line 4). This resolver acts as a bridge, calling the original REST API method (Listing~\ref{lst:rest-example}, line 9). The \emph{Article} object returned by the REST method is then mapped to its GraphQL-compatible `Article` model by a dedicated method in `ObjectMapper.java`.
Finally, this mapped object is passed back to the `graphql-java` library, which parses this object to generate the response. Crucially, the library ensures the final response only contains the fields explicitly requested by the client—in this case, just the `author` and `title`.

\textbf{Handling Authorization:}
The code generation process also accounts for cross-cutting concerns, such as security. If a resolver's underlying REST method requires authorization (e.g., `addArticle` in Listing~\ref{lst:rest-example}), GraphQLify automatically injects the necessary authentication details. It passes context-specific authorization parameters to the REST method using a customized `InjectedParameter` object, as shown in the generated resolver code (Listing~\ref{lst:resolver-example}, line 13).

%% file: Sections/methodology.tex
\section{GraphQLify System Components}
\label{sec:methodology}

This section details \textit{GraphQLify}'s system components and their designs.

\subsection{Intermediate Type Definition Model}
\label{sec:api_processing}

The processor component (\S~\ref{sec:processor}) analyzes APIs identified by the plugins to produce their intermediate definitions. To analyze an API, the processor must resolve its components, e.g., name, input parameters, declaring class, and output type, and then map them to our intermediate type definition. 
To support polymorphism, we define two root classes, namely \texttt{TypeDefintion} and \texttt{Definition} classes. \texttt{TypeDefintion} (like \texttt{Object} in Java) acts as the base for all regular data types, i.e., String, Integer, etc., whereas
\texttt{Definition} (like \texttt{Class} in Java) acts as the base for all meta element definitions, i.e., fields, parameters, and operations, etc.  Various other types under these two base definitions are as follows.

\begin{figure}
  \centering
  \includegraphics[width=0.8\linewidth]{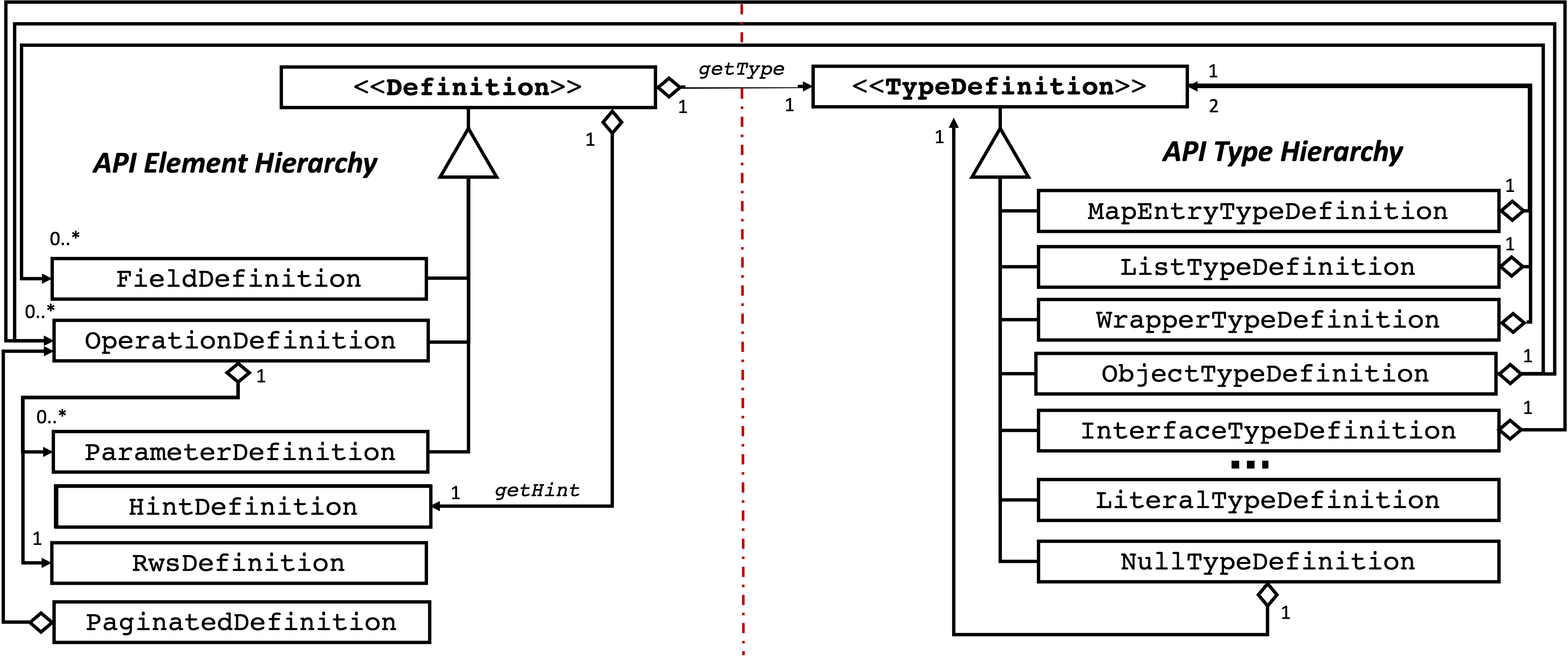}
  \caption{Main components of our proposed intermediate definition model. This definition model is used to convert types between REST and GraphQL. 
  }
    \label{fig:graphqlifyDefinition}
    \vspace{-15pt}
\end{figure}

\noindent
\textit{ \textbf{$\diamond$ Data Type Definitions.}} GraphQLify ensures type preservation during API translation using an intermediate type system, defined by the TypeDefinition hierarchy (Figure~\ref{fig:graphqlifyDefinition}). A dedicated Processor maps types from the source API to this intermediate model. To guarantee fidelity, the Processor prevents common mapping errors such as: i) Under- or over-casting (e.g., \texttt{Double} to \texttt{Float}); ii) Losing precision (e.g., mapping  \texttt{Address} type to a generic \texttt{Object}), and iii) Mismatching types (e.g., mapping \texttt{DateTime} to\texttt{String}). 
The important types in our intermediate model are as follows.

\vspace{2pt}
    \noindent \textit{\textbf{$\diamond$ Primitive Types:}} We map programming language built-in primitive data types (e.g., Integer, String) to a new type in our type system named \texttt{LiteralDef}. However, GraphQL supports only five scalar types, namely: Int, String, Boolean, Float, and ID. Therefore, we introduce two additional scalar categories: i) extended scalars allow framework developers to extend GraphQL scalars for programming language primitives not natively supported in GraphQL (e.g., \texttt{long} or \texttt{char}), and ii) custom scalars allow application programmers to specify user-defined scalars (e.g., \texttt{PhoneNumber}).

    \vspace{2pt}
    \noindent \textit{\textbf{$\diamond$ Void and Nulls:}} We designate a specific \texttt{VoidDef} for \textit{void}-like (i.e., return nothing) API output types, since GraphQL does not support void types. While defining void types as \texttt{LiteralDef} is achievable, dedicating a specific type definition provides flexibility concerning its GraphQL target definition. Also, by default, we assume all types are nullable and wrap non-null types (e.g., \texttt{String!}) in a \texttt{NonNullDef}.

    \vspace{2pt}
    \noindent \textit{\textbf{$\diamond$ List Type:}} For array-like types, GraphQL supports list types. Therefore, our intermediate definition model maps vectors, sets, lists, collections, arrays, or array-like types to an intermediate type named \texttt{ListDef}. We also recursively convert their component types to their respective type definitions, which allows \textit{GraphQLify} to build nested lists (i.e., lists of lists).

    \vspace{2pt}
    \noindent \textit{\textbf{$\diamond$ Map Type:}} GraphQL does not support maps, dictionaries, or multi-dimensional arrays. Therefore, we convert them to lists of entries, where each entry is a synthesized object of the components. Taking a key-value map as an example, we recursively map both key type \texttt{K} and value type \texttt{V} to construct an entry object type definition, i.e., \texttt{MapEntryDef}, containing the key and value type definitions. This results in a \texttt{ListDef} where the component type is \texttt{MapEntryDef}.

    \vspace{2pt}
    \noindent \textit{\textbf{$\diamond$ Object Type:}} Mapping complex types follows the following rules. If it is an interface, we recursively process its public methods, resulting in an \texttt{InterfaceDef}. For object types, we recursively map all non-transient (i.e., serializable) fields and map each one of them recursively to construct an \texttt{ObjectDef} with its fields. For parameterized complex types that leverage type arguments (e.g., \texttt{Response<String>}, we recursively process the type argument instances (if available) and attach them to their complex type's definition. This step is crucial to help uniquely name the parameterized complex type during the GraphQL translation step.

    \vspace{2pt}
    \noindent \textit{\textbf{$\diamond$ Wrappers:}} 
    In some APIs, certain wrapper types used to wrap the output types are superfluous and only bear significance in the original APIs or are considered transient. Taking \texttt{ResponseEntity<Customer>} as an example, the type \texttt{ResponseEntity} is a wrapper type that is only meaningful in REST APIs. Therefore, \textit{GraphQLify} provides the ability to define common wrapper types at the framework, plugin, or user levels as a configuration property. If such a type is encountered, the semantic processor unwraps and discards the outer type and only processes the inner type, i.e., \texttt{Customer}.

\vspace{3pt}
\noindent
\textbf{\textit{$\diamond$ Meta Type Definitions.}} To preserve type information for meta elements, i.e., fields, parameters, operations, etc., we define corresponding meta types  (type hierarchy under \texttt{Definition}). One notable meta definition is \texttt{OperationDefinition}, which is used to define operation types for GraphQL. In our intermediate type, we define three possible operation types: read, write, and subscribe, which we infer based on an operation's intent. We define a \textit{readonly} API as one that does not modify the state or cause side effects on the server and, thus, maps it to a \texttt{RwsDef} value of \texttt{READ}.
Otherwise, we presume an API is \textit{writable} and maps to \texttt{RwsDef} value of \texttt{WRITE}. For example, REST methods marked with \texttt{GET} HTTP designators are read-only and therefore map them to \texttt{READ}, whereas \texttt{POST}, \texttt{PUT}, \texttt{PATCH}, or \texttt{DELETE} are \textit{writable} and map them to \texttt{WRITE}. Similarly, \texttt{select} statements in database scripts are \textit{readonly}, whereas \texttt{update} statements are not. Lastly, understanding the semantics of subscription-based operations is complex; \textit{GraphQLify} currently only supports read and write.

\noindent
\textbf{Soundness and Completeness Guarantees.} One can view our intermediate type system as a theory of mapping types to types. We formalize this relationship using the judgment:  $\frac{
    t \text{ } \mapsto \text{ } \tau_{IR}
    \quad
    \tau_{IR}\text{ }\hookrightarrow \text{ }\tau
  }
  {
    \Gamma\text{ }\vdash \text{ } t \text{ }: \text{ }\tau
  }$, which reads as: \textit{under the intermediate type theory $\Gamma$, the source language type $t$ semantically maps with the GraphQL type $\tau$}, if there exists an IR type $\tau_{IR}$ such that (i) $t$ semantically maps ($\mapsto$) with $\tau_{IR}$, and (ii) $\tau_{IR}$ semantically maps ($\hookrightarrow$) with $\tau$.

\begin{lemma}
    (Soundness) The type theory $\Gamma$ is \emph{sound} for a source language $L$ if, for any type $t \in \mathsf{Type}_L$, $\Gamma \vdash t : \tau \implies semantic\{t\} \subseteq semantic\{\tau\}$. It means if our type theory $\Gamma$ maps a source language type to a GraphQL type, the mapping is semantically safe.
\end{lemma}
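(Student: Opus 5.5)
The judgment $\Gamma \vdash t : \tau$ has exactly one rule, which factors through an IR type $\tau_{IR}$. So I will prove soundness by establishing two separate inclusions and composing them. Concretely, I will show
\[
semantic\{t\} \subseteq semantic\{\tau_{IR}\} \quad\text{whenever } t \mapsto \tau_{IR},
\]
\[
semantic\{\tau_{IR}\} \subseteq semantic\{\tau\} \quad\text{whenever } \tau_{IR} \hookrightarrow \tau .
\]
Transitivity of $\subseteq$ then gives the statement. For this I first fix a value semantics for each layer. A source type denotes the set of runtime values of $L$ that inhabit it. An IR type denotes a set of abstract values: literals, null, finite lists, finite records, and lists of key/value entries. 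A GraphQL type denotes the set of JSON-like values accepted by result coercion in the GraphQL specification~\cite{graphql_spec}. To compare sets across layers, I take $semantic\{\cdot\}$ modulo the serialization maps, namely those generated into \texttt{ObjectMapper} together with the GraphQL serializer.

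Both inclusions go by structural induction on the derivation of $\mapsto$ (respectively $\hookrightarrow$), with one case per constructor of the \texttt{TypeDefinition} hierarchy.
\begin{itemize}
\item \texttt{LiteralDef}. The primitive table is chosen so that no narrowing occurs, and extended or custom scalars are used exactly when no native GraphQL scalar has a superset of the values. For example, \texttt{long} goes to an extended \texttt{Long} scalar, not \texttt{Int}. So this case is a finite check over the table, using the design principles of type-safety and losslessness.
\item \texttt{VoidDef}. It is inhabited only by the unit/null value, which its target accepts.
\item \texttt{NonNullDef} and nullable-by-default. Removing null from both sides preserves inclusion, and a nullable target only enlarges the set.
\item \texttt{ListDef}. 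The induction hypothesis on the component type lifts pointwise to finite sequences. Sets and collections embed into lists by enumeration.
\item \texttt{MapEntryDef}. A map is injectively encoded as a list of entries. The induction hypotheses on \texttt{K} and \texttt{V} give inclusion entrywise.
\item \texttt{ObjectDef} and \texttt{InterfaceDef}. Every serializable field is mapped recursively, so the induction hypothesis applies field by field. Interfaces are handled in the same way through their public methods.
\item Wrappers. A configured wrapper is discarded, so here soundness is taken relative to the payload that is actually serialized.
\item Generic instantiations. Each instantiation, e.g.\ \texttt{Response<User>}, becomes its own monomorphic \texttt{ObjectDef} with the argument substituted. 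The object case then applies directly.
\end{itemize}
The renaming used for namespaces and overloading only changes names. It does not change the structure of a type, so it leaves the semantics unchanged.

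The main obstacle is recursive and cyclic types, such as a \texttt{User} with a \texttt{List<User>} field. There a naive structural induction is not well-founded. My first attempt would be to interpret object types coinductively, as greatest fixed points over a named type environment. I would then prove inclusion by showing that the relation $\{(t,\tau_{IR}) : t\mapsto\tau_{IR}\}$ is a simulation, so that coinduction yields the inclusion. If that proves awkward, I would fall back to restricting attention to finite values and inducting on value depth.

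A secondary difficulty is that subclass values may be returned at a supertype. Here I would rely on the paper's stated insight that preserving top-level types suffices. Inclusion still holds because the serialized value is projected onto the declared fields, and that projection lies in the semantics of the declared type.
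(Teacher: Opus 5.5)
Your proposal follows the paper's own argument, made explicit: as in the paper's sketch, you factor the single rule through $\tau_{IR}$, chain $semantic\{t\} \subseteq semantic\{\tau_{IR}\} \subseteq semantic\{\tau\}$, and discharge each type former (literals, lists, objects, the \texttt{Map}-to-entry-list encoding, void and null) by construction, while your structural induction, coinductive treatment of recursive types, and wrapper, generic, renaming and subtype-projection cases fill in details the paper leaves implicit rather than taking a different route. The one step that both versions assert rather than check is the literal table, and it does need checking: under the specification's result coercion for \texttt{Float}, which you adopt as the GraphQL semantics, NaN and $\pm$Infinity raise a field error, so a Java \texttt{double} $\to$ \texttt{Float} entry would break the inclusion unless such values are mapped to an extended scalar or the source semantics is restricted to finite values.
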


\textit{Proof Sketch.} Proving soundness of $\Gamma$ would boil down to the following cases.

\begin{itemize}
    \item \textit{Case 1: direct correspondence.} For a given $t$, $\exists \tau.\text{ } \Gamma \vdash t : \tau$, such that $t$ and $\tau$ has the same semantics. Examples of such types are primitives (i.e., \texttt{Integer}, \texttt{Boolean}, \texttt{String}, etc.)  and List types. For these cases, $\Gamma$ establishes an exact semantic mapping. For example, following is the type judgment for List:

\begin{center}
$\frac{
    \texttt{List}<v> \text{ } \mapsto \text{ } \texttt{ListDef}<v>
    \quad
    \texttt{ListDef}<v> \text{ } \hookrightarrow \text{ } [v]
  }
  {
    \Gamma \text{ } \vdash \text{ } List<v> \text{ } : \text{ } [v]
  }$.
\end{center}

Here, soundness follows trivially, since by construction, $\Gamma$ preserves semantics for both $t$ and $\tau$ in our intermediate type representation $\tau_{IR}$. Same argument holds for \texttt{Object} types too.

\item \textit{Case 2: no direct correspondence.} For a given $t$, $\nexists \tau. \text{ } \Gamma \vdash t : \tau$, such that $t$ does not have a semantically equivalent type $\tau$. For this case, $\Gamma$ creates a new representation $\tau_{IR}$, such that $semantic\{t\} \subseteq semantic\{\tau_{IR}\} \subseteq  semantic\{\tau\}$. Examples of such types include Map, which has no native support in GraphQL. We present the type judgement for Map as follows.

\begin{center}
$\frac{
    \texttt{Map}<k, v> \text{ } \mapsto \text{ } \texttt{ListDef}<\texttt{MapEntryDef}(k, v)>
    \quad
    \texttt{ListDef}<\texttt{MapEntryDef}(k, v)> \text{ } \hookrightarrow \text{ } [\texttt{MapEntryDef}(k, v)]
  }
  {
    \Gamma \text{ } \vdash \text{ } Map<k, v> \text{ } : \text{ } [\texttt{MapEntryDef}(k, v)]
  }$.
\end{center}
Here, too, soundness follows immediately as \texttt{ListDef}<\texttt{MapEntryDef}($k, v$)> preserves the semantics of the original \texttt{Map}<$k, v$> type. Same argument holds for \texttt{void} and \texttt{null} types too. 

\end{itemize}

Similarly, completeness implies $\Gamma$’s capability to handle all valid types  $t \in \mathsf{Type}_L$. As our intermediate type system is designed to handle types from popular programming languages, thus, completeness is guaranteed by construction.

\subsection{Processor: \textit{GraphQLify} API Processing}
\label{sec:processor}
We design a plugin-based mechanism to handle framework-aware AST processing.
A plugin targets APIs written in a specific technology, framework, or programming language. Plugins use AST-based parsers to parse input source code to an abstract syntax tree representation and process APIs in a fashion specific to the technology or framework they support. For instance, one plugin may target Java, specifically scanning for Spring controller methods representing REST APIs~\cite{spring} based on HTTP method designators in the source code (e.g., \texttt{@GET}, \texttt{@POST}, \texttt{@PUT}, \texttt{@PATCH} or \texttt{@DELETE}). Another plugin may scan API operations based on their OpenAPI~\cite{openapi} definitions. Similarly, others may target SQL database scripts scanning for annotations in their data definition language (DDLs), data modification language (DMLs), or stored procedures. 
After identifying all the data types and operation types (i.e., \texttt{GET}, \texttt{POST}, \texttt{PUT}) defined in the method declarations, \textit{GraphQLify} maps them with our intermediate types (\texttt{OperationDef}) to create a definition model.

\subsection{Translator: Definition Models to Schema} 
\label{sec:translator}
The translator module uses the definition model generated by the processor to create the GraphQL schema. It iterates over each operation definition (\texttt{OperationDef} in Figure~\ref{fig:graphqlifyDefinition}) and transforms them to a corresponding GraphQL operation. This process uses the mapping between our intermediate type definitions and GraphQL types, which we created as part of the intermediate type definition creation process. Then, we use \textit{GraphQL lib}~\cite{graphql_java} to validate the GraphQL definitions and, in turn, to create a GraphQL schema.

\subsection{Generator: Resolver Generation}
\label{sec:generator}
To produce a fully functional GraphQL server, the last step in \textit{GraphQLify} is to generate the GraphQL resolver functions that handle user runtime GraphQL queries. \textit{GraphQLify}'s generator module produces source code of the following: (i) an object model of GraphQL input and output types, (ii) GraphQL resolver functions of queries and mutations, (iii) compile-time mappings that bind GraphQL objects to invoke Original API code,  and (iv) GraphQL server runtime configuration. \textit{GraphQLify}'s three-step approach is as follows.


\vspace{2pt}
\noindent
\emph{Step 1: GraphQL type to code type conversion} The generation process starts with creating data and object types for code from the GraphQL schema. If the type of a GraphQL element is complex (i.e., object or interface), the Generator iterates its fields and builds their respective types recursively. Next, it generates a file containing the  GraphQL object type before adding it to the set of GraphQL objects. It also generates a separate file for enum types and adds them to the set of types. For other types, it uses the intermediate definition model to convert them into language-specific types to generate their source code type. For example, we convert \texttt{ListOfIntStringEntry} back to a \texttt{Map<Integer, String>} or a custom GraphQL scalar \texttt{Byte} to a Java \texttt{Byte} class.

\vspace{2pt}
\noindent
\emph{ Step 2: Build Resolvers} The Generator builds two resolver classes to handle GraphQL queries and mutations, respectively. It iterates all the operations of the \texttt{root} (i.e., query or mutation) from GraphQL schema to generate their respective resolver functions. For each operation, it maps the types of its arguments to their corresponding REST API inputs. Similarly, it maps the return type to the REST API outputs. Next, it generates the resolver function with the appropriate mapping and invokes the original API methods to handle the request. Finally, this resolver function is added to the set of resolver functions in the root resolver.

\vspace{2pt}
\noindent
\emph{Step 3: Build a GraphQL Server} The Generator also creates the configuration classes necessary to create a GraphQL server. This process includes (i) creating a web handler to respond to the \texttt{/graphql} endpoint, (ii) creating a schema parser that reads, parses, and validates the generated GraphQL schema, and (iii) wiring the generated GraphQL resolvers to handle users' queries.

\subsection{Implementation} 
\label{sec:impl}
We implemented our \textit{GraphQLify} prototype in Java to handle API codes written in Java, targeting JDK 17 compatibility. Our implementation has approximately 10K lines of Java code and 1K of configuration and build files.  We use the Java Annotation Processing (JAP) API~\cite{javaAnnotationProcessor} to parse annotations.
To generate Abstract Syntax Trees (ASTs) from code, \textit{GraphQLify} uses Spoon, a third-party Java library~\cite{spoon2006}. Spoon operates directly on the AST of the Java annotation processor and provides a strongly typed and easy-to-understand API that enables traversing the AST in a declarative fashion. To implement \textit{GraphQLify}'s schema generator, we use GraphQL-Java~\cite{graphql_java}. GraphQL-Java is widely recognized as the de facto standard for building GraphQL APIs in Java. Post-completion, we also use GraphQL-Java to validate the generated schema. 

\vspace{2pt}
\noindent \emph{Plugins}:
To validate \textit{GraphQLify}'s pluggability, we developed four independent plugins that target various technologies in the Java programming language, namely: \textit{openapi-plugin} targeting APIs defined using the OpenAPI specification~\cite{openapi}, \textit{jaxrs-plugin} for REST APIs implemented using JAX-RS~\cite{jaxrs}, \textit{springweb-plugin} for Spring-based Java REST APIs~\cite{spring}, and \textit{graphql-plugin} as a custom plugin. For example, the \textit{openapi-plugin} utilizes static analysis to detect Java APIs with OpenAPI semantics e.g., methods with OpenAPI's \texttt{@Operation} annotation, \textit{springweb-plugin} targets controller methods with Spring's semantics such as \texttt{@GetMapping} or \texttt{@PostMapping}.

\vspace{2pt}
\noindent \emph{Execution modes}:
Similar to OASGraph~\cite{graphqlwrappers}, \textit{GraphQLify} operates in two modes: (1) \code{strict} mode generates partial schema by skipping APIs that fail due to programming language features (\S~\ref{sec:language-features}) (2) \code{non-strict} mode automatically mitigate issues encountered during schema generation using mitigation heuristics detailed in \S~\ref{sec:language-features}.
The \code{non-strict} mode aims to achieve the envisioned end-to-end automation.  

\vspace{2pt}
\noindent \emph{Failure diagnosis}:
\textit{GraphQLify} includes a \textit{Failure diagnosis} to help developers identify the root causes in the Source API with suggestions to fix them. If GraphQLify fails to convert an API in the \code{strict} mode, it logs a warning message that contains the following information -- (i) \textit{Category} represents a high-level title of the warning, (ii) \textit{Description} provides additional human-readable details about the warning, (iii) \textit{StackTrace} specifies the root cause of the warning, (iv) \textit{Location} points to the source file and line number, and (v) \textit{Resolution Strategy} describes the approach \textit{GraphQLify} would have taken to mitigate a failing API in the \code{non-stict} mode.

%% file: Sections/evaluation.tex
\section{Empirical Evaluation} \label{sec:evaluation}

We evaluate \textit{GraphQLify} with nine popular open-source Java projects against OASGraph~\cite{graphqlwrappers}. The following subsections detail our evaluation metrics, configuration, and results.

\subsection{Project Selection}

We select nine open-source projects (Table~\ref{table:evaluation_projects}) from GitHub to evaluate \textit{GraphQLify} based on the following six criteria: 
    i) Open source project written in Java;
    ii) REST implementation uses OpenAPI, Spring Web, or JAX-RX;
    iii) Has at least 2,000 lines of code;
    iv) Has at least 10 REST APIs;    
    v) Has at least 50 stars; and
    vi) Has at least 50 forks.    
The first two criteria are selected since \textit{GraphQLify} requires a plugin to parse APIs from each REST implementation. In the current prototype, we implemented four plugins; three target the listed ones in the second criterion. 
We use the third and fourth criteria to exclude small-scale projects. Finally, we apply the remaining two criteria to select popular projects with community interests. 

We manually inspect a project to check its development activities and see if it meets all those criteria. 
We purposely seek source code that may not necessarily only contain friendly data types. This step is important to help test \textit{GraphQLify}'s ability to handle diverse APIs. For example, \textit{Rest Countries} includes data types that are not defined in the GraphQL specification (e.g., date and/or time) as well as data types that do not have any significance in a GraphQL schema, such as Java's \texttt{Object} type. We also selected projects to represent diverse application domains. 

\input{tables/evaluation-projects2}

\subsection{Evaluation Setup}
We evaluate \textit{GraphQLify} with a project based on the following eight steps.
\circled{1} We import our prototype artifact as a dependency to a project with the respective plugin to match the source code's API technology. Upon the compilation of each project, \textit{GraphQLify} automatically performs source-code processing to scan for qualifying APIs and generate a corresponding GraphQL schema. 
 \circled{2} To avoid inflating GraphQLify's performance by iteratively improving it to fit our test projects, we use  \hyperlink{https://github.com/r-spacex/SpaceX-API}{SpaceX APIs} as unit tests during development.
    \circled{3}  We run \textit{GraphQLify} with the \code{strict} mode and compute the five metrics listed in Section~\ref{sec:metrics}. We also record categories of failures and mitigation strategies reported by the tool.
     \circled{4} We run \textit{GraphQLify} with the \code{non-strict} mode and compute the same five metrics. 
    \circled{5} Two of the authors independently inspected each of the generated APIs and the source code of the original API to evaluate its correctness and type preservation. These two inspectors have extensive experience developing enterprise Java applications with microservices, RESTful, and GraphQL APIs used by millions of users.  
    As both inspectors agreed on all cases, we did not require a conflict resolution step.    
    \circled{6} If the project uses the OpenAPI specification, we also generate schemas using OASGraph, a current state-of-the-art tool that we compare against. The inspectors also manually inspected OASGraph-generated APIs against original APIs to evaluate type preservation.
    \circled{7} We deployed six projects on a server and manually added data using the application. Next, we manually test 10 APIs covering diverse use cases from each project by sending GraphQL queries using an Altair-based GraphQL client~\cite{altair} and comparing them against the expected response. This step further validates that our end-to-end automated migration works. \circled{8} Finally, to compare real-world data-fetching latency between REST and GraphQLify-generated APIs, we deployed five of the applications on Google Cloud Platform (GCP). We sent 5 inter-dependent queries to both REST and GraphQL APIs and measured round-trip time. 
Our evaluation answers five questions detailed in the following subsections.

\subsection{EQ1: How Does \textit{GraphQLify} Perform in its Two Modes?}
\label{sec:metrics}
For this evaluation, we select the following five evaluation metrics.
\circled{1} \emph{Failure rate} measures the percentage of APIs in an application that a tool fails to transform to GraphQL.
    \circled{2}  \emph{Invalid schema rate} represents the percentage of generated schema that failed validation~\cite{graphql_java}.
    \circled{3}  \emph{Schema Size} represents the lines of code in an auto-generated GraphQL schema.    
    \circled{4} \emph{Type Count} indicates the total number of types in an auto-generated GraphQL schema. 
    \circled{5} \emph{Type mismatch} refers to the cases where types in the generated schema differ from those in the source API.

Table~\ref{table:evaluation_projects} also shows the performance of \textit{GraphQLify} in two execution modes. In the \code{strict} mode, it fails for 31.3\% of the APIs. Notably, higher shares of failures are observed among large-scale, complex projects (i.e., Mall and Metas Fresh). This finding is not surprising since the potential of namespace conflict grows with the size of the project code. \textit{GraphQLify}'s \code{non-stict} mode can mitigate all these cases and has a 0\% failure rate. Both modes generate valid GraphQL schemas in all cases.  \textit{GraphQLify} generated schemas range between 85 and 6,000 lines, indicating the scalability of our solution.
Our manual evaluation found no type mismatch in both modes -- if the type is present in the source code, \textit{GraphQLify} matches.

\begin{table}
    \centering
    \caption{Evaluation of \textit{GraphQLify}'s heuristics based type mapping}
    \label{tab:correction-eval}
    \begin{tabular}{|p{3.4cm}|r|r|r|r|r|r|r|} 
    \hline
        \multirow{2}{*}{\textbf{Project}} & \multicolumn{3}{c|}{\textbf{Mapping}} & \multicolumn{4}{c|}{\textbf{Mapping causes}} \\ \hhline{~-------}
            &  \rotatebox[origin=c]{60}{\#Total}& \rotatebox[origin=c]{60}{\#Valid} &\rotatebox[origin=c]{60}{\# Invalid} & \rotatebox[origin=c]{60}{Missing} & \rotatebox[origin=c]{60}{Invalid} & \rotatebox[origin=c]{60}{Unknown} & \rotatebox[origin=c]{60}{Conflict} \\ \hline
Mall & 114 &114  & 0  & 1 & 0 & 0 & 113 \\ \hline
Metas Fresh & 165 & 165 & 0 & 0 & 35 & 35 & 95 \\ \hline
Exadel CompareFace & 10 & 10 & 0 & 0 & 6 & 0 & 4 \\ \hline
Rest Countries & 11 & 11 & 0 & 0 & 0 & 0 & 11\\ \hline
Blog REST API & 0 & NA & NA & 0 & 0 & 0 & 0\\ \hline
E-commerce REST API & 9 & 9 & 0 & 0 & 0 & 0 & 9\\ \hline
Pet Clinic REST & 0 & NA & NA & 0 & 0 & 0 & 0\\ \hline
Travels Java API & 4 & 4 & 0 & 0 & 0 & 0 & 4\\ \hline
Task Mgmt System & 5 & 5 & 0 & 0 & 0 & 4 & 1\\ \hline
\textbf{Total} & \textbf{318} & \textbf{318} & \textbf{0} & \textbf{1} & \textbf{41} & \textbf{39} & \textbf{237}\\ \hline
    \end{tabular}
    
    \end{table}

\subsection{EQ2: What is the Accuracy of  \textit{GraphQLify}'s Heuristics based Type Mapping?}

Table~\ref{tab:correction-eval} shows the performance of \textit{GraphQLify}'s heuristics-based type mapping. \textit{GraphQLify} made 318 type mappings to generate schemas in the \code{non-strict}. Our manual evaluation found all mappings to be valid, which shows their reliability.  
Our results also indicate that most of the required heuristics-based mappings (75\% ) are due to name conflicts, which is unsurprising since these applications are written in Java, which supports polymorphism-based designs. Types that violate the GraphQL specification (i.e., ones without any field) rank second with 13\%, followed by unknown types (12\%).

\subsection{EQ3: Do \textit{GraphQLify}-Generated APIs Return Correct Data?}

For this evaluation, we selected 50 GraphQL APIs (queries and mutations) from five projects, with 10 from each project. Verified projects include Rest Countries,  Task Mgmt System, Travels Java, Blog,  and Pet Clinic. For the remaining four projects, we had difficulty setting up the projects and running the REST APIs. Moreover, this step is time-consuming since it requires running the application, configuring API servers, and populating data through interactions. We take a convenient sample to ease configuration and data generation time. Before marking an API as correct, we validate it in several ways. First, we validate that the content of each query response matches that of its REST counterparts. Second, we attempt to send invalid data or fields to ensure the GraphQL server validates them against the schema. Third, we validate that the GraphQL client's introspection shows the GraphQL query and its input and output object structures. Finally, we validate sending multiple queries within the same request. Our validation suggests 100\%  end-to-end successful API conversions, as all 50 APIs pass the aforementioned criteria and return correct data.

\begin{table}
    \centering
    \caption{Failure rate and type mismatch of \textit{GraphQLify} compared to OASGraph~\cite{graphqlwrappers}}
    \label{tab:oasgraph}
    \begin{tabular}{|l|r|R{2.4cm}|r|R{2.4cm}|} \hline
   \multirow{2}{*}{\textbf{Project}} &  \multicolumn{2}{c|}{\textbf{OASGraph}} & \multicolumn{2}{c|}{\textbf{GraphQLify}} \\ \hhline{~----}
      & \textbf{Failure} & \textbf{Type mismatch} & \textbf{Failure} & \textbf{Type mismatch} \\ \hline
     Exadel CompareFace    & 7  & 14 & 0 &  0 \\ \hline
     E-commerce REST API    &  0 & 26  &  0&  0  \\ \hline
     Pet Clinic REST    & 0 &  6& 0 &  0 \\ \hline
     Travels Java API   & 0  & 32 & 0 & 0  \\ \hline
     Blog REST API     & 0 & 4 & 0 & 0 \\ \hline
     \textbf{Overall}   & \textbf{7 (3.5\%)}  & \textbf{82 (41.8\%)} &\textbf{ 0 (0\%)} &  \textbf{0 (0\%)}  \\ \hline
    \end{tabular}
    \end{table}
\subsection{EQ4: How Does \textit{GraphQLify}'s Performance Compare Against OASGraph?} 
For this evaluation, we only select OASGraph~\cite{graphqlwrappers} since it is the only one that provides end-to-end automation (Table~\ref{tab:tool-grid}), a necessity for independent evaluation on a third-party project. However, OASGraph only supports OpenAPI, and three of the nine projects do not comply with OpenAPI specifications. Although Metas Fresh complies with the OpenAPI specification, we failed to run it with OASGraph.
Table~\ref{tab:oasgraph} compares the performance of \textit{OASGraph} against \textit{GraphQLify} with the remaining five projects. We ran both tools in \code{non-strict} mode. While OASGraph has a failure rate of 3.5\% (i.e., 7 out of 196 APIs), \textit{GraphQLify} has 0\% failure. Although OASGraph has a low failure rate, it fails to preserve the original type in most cases (42\%). On the other hand, \textit{GraphQLify}'s generated schema fully matches the ones from the source code in all cases. These results also show the superiority of our approach, especially where type safety is a priority.

\subsection{EQ5: How much Data Fetching time does a Client Save with \textit{GraphQLify}-Generated API Compared to REST by Avoiding REST's Under-fetching Issue?} 
To answer this question, we designed a test consisting of five interdependent queries. In the REST setting, each query depended on the data returned by the previous one, thereby exposing the underfetching problem inherent to REST. For both the REST and \textit{GraphQLify}-generated endpoints, we measured the average round-trip time (RTT) across five separate trials to fetch the final dataset.
The results, illustrated in Figure~\ref{fig:graphqlexample}, show that \textit{GraphQLify}-converted APIs significantly outperform REST APIs in scenarios requiring multiple, dependent data fetches.
While GraphQL has a slight overhead on single requests due to its added computational overhead, its ability to bundle nested data into a single call provides a substantial advantage as complexity increases. For the five-call sequence, we observed an approximate 4x reduction in RTT for the Pet Clinic and Task Management projects, with the other three projects showing a 2-3x RTT reduction.
This performance gain arises entirely due to the reduction in network round-trip. Since the GraphQLify-generated API and the original REST API execute the same underlying business logic, the server-side computation time will be slightly higher for GraphQL due to framework overhead. We did not quantify this overhead as it is negligible compared to the network round-trip cost, which dominates overall latency.

\begin{figure}
  \centering
  \includegraphics[width=\linewidth]{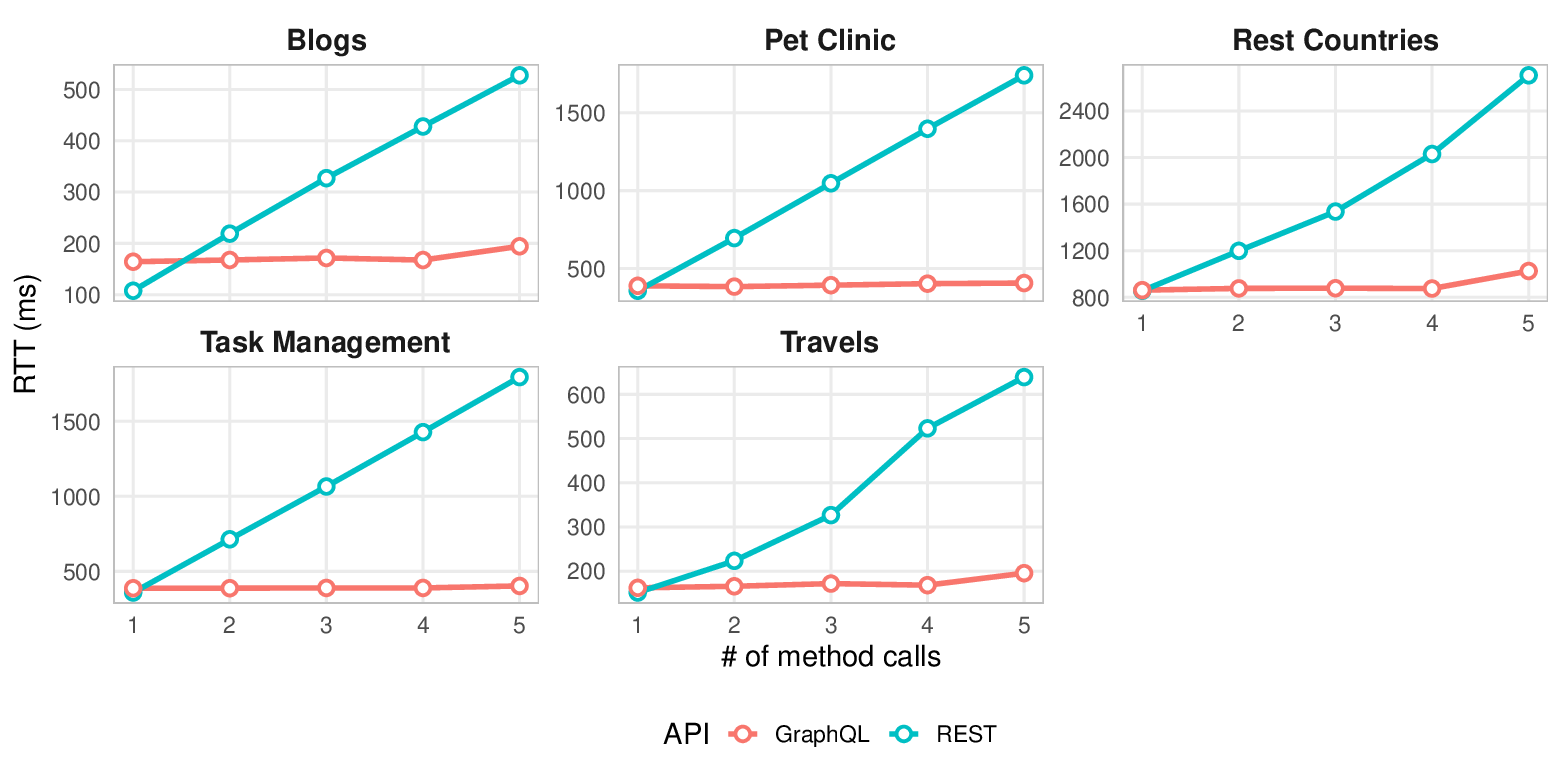}
      \caption{Comparison of data fetching time between REST and GraphQL. The X-axis shows the number of method calls required for a task. The Y-axis shows the average RTT time over five repetitions. }
  \label{fig:graphqlexample}  
\end{figure}

\begin{boxedtext}
    \textbf{Evaluation Summary:}  GraphQLify successfully converted 834 REST APIs across nine projects to GraphQL with perfect type fidelity. Unlike the state-of-the-art tool OASGraph (which had a 3.5\% failure rate and 42\% type mismatch), GraphQLify achieved zero conversion failures or type mismatches. This automated conversion was further verified by a manual validation of 50 APIs, which confirmed a flawless end-to-end migration. Beyond correctness, the generated APIs deliver significant performance gains. For application workflows requiring five sequential API calls, clients can reduce data fetching time by a factor of 2 to 4 by using the GraphQLify-generated APIs over REST counterparts.
\end{boxedtext}

%% file: tables/evaluation-projects2.tex
\begin{table}
\centering
\caption{Performance of \textit{GraphQLify} in terms of five evaluation metrics with nine open source Java projects }
\label{table:evaluation_projects}
\resizebox{\linewidth}{!}{
\begin{tabular}{|p{3.2cm}|R{.8cm}|R{.8cm}|R{.8cm}|R{1cm}|R{1cm}|R{.8cm}|R{1.8cm}|R{1.5cm}|R{1.2cm}|R{1.5cm}|} \hline
    \multirow{2}{3.2cm}{\textbf{Project}}& \multirow{2}{0.8cm}{\textbf{Forks}} & \multirow{2}{0.8cm}{\textbf{Stars}} & \multirow{2}{0.8cm}{\textbf{APIs}} & \multirow{2}{1cm}{\textbf{LOC}}
    &    \multicolumn{2}{c|}{\textbf{Failure rate (\%)}}&  \multirow{2}{1.8cm}{\textbf{Invalid schema (\%)}}&  \multirow{2}{1.5cm}{\textbf{Schema size (LOC)}} &	 \multirow{2}{1.2cm}{\textbf{Type count}}	& \multirow{2}{1.5cm}{\textbf{Type mismatch}} \\ \hhline{~~~~~--~~~~} 
      & & & & & 	\textbf{{strict}}	& \textbf{\textit{non-strict}} & &  & &	 \\ \hline
\href{https://github.com/macrozheng/mall}{Mall} &  28.6K & 76.8K & 160 & 133.5K &  62.5 &	0	& 0& 	1,600 &	173 &	0  \\
\hline

\href{https://github.com/metasfresh/metasfresh}{Metas Fresh} & 577 & 1.7K & 439 & 3.4M  &39.8 	 & 0&	0	&5,949 &	956	& 0 \\
\hline

\href{https://github.com/exadel-inc/CompreFace}{Exadel CompareFace} & 687 & 5K & 64 & 19.7K &	10.9 &	0	&0	&528 &	115&	0 \\
\hline

\href{https://github.com/apilayer/restcountries}{REST Countries} & 355 & 2.2K & 27 & 2.6K & 3.7 &	0&	0	&85 	&15&	0 \\
\hline

\href{https://github.com/osopromadze/Spring-Boot-Blog-REST-API}{Blog REST API} & 333 & 728 & 54 & 5.2K &	0&	0	&0	&547 &	64&	0 \\
\hline

\href{https://github.com/vatri/spring-rest-ecommerce}{Ecommerce REST API} & 165 & 327 & 21 & 2.2K & 4.7&	0	&0&	249 &	42&	0 \\
\hline

\href{https://github.com/spring-petclinic/spring-petclinic-rest}{Pet Clinic REST} & 857 & 458 & 36 & 13.5K &	0	&0&	0	&204& 33&	0 \\
\hline

\href{https://github.com/mariazevedo88/travels-java-api}{Travels Java API} & 123 & 243 & 13 & 4.1K &	15.4&	0	&0&	160 &39	&0 \\
\hline

\href{https://github.com/ayushman1024/TASK-Management-System}{Task Mgmt System} & 50 & 89 & 19 & 2.6K &	21.1	&0	&0	&168& 34	&0\\ 
\hline
\multicolumn{5}{|r|}{\textbf{Overall}}  &	31.3	&0	&0	& - & - & - \\ \hline

\end{tabular}
}
\end{table}

%% file: Sections/discussion.tex
\section{Discussion and Limitations}
\label{sec:discussion}

The following subsections describe the implications and limitations of \textit{GraphQLify}.

\subsection{Need for Static Analysis-based Approaches:} Type safety is a core feature of GraphQL~\cite{quina2023systematicmapping}. Our results suggest that even a state-of-the-art tool such as OASGraph largely fails to preserve original types after conversion, violating this key principle. We acknowledge that an adapter-based approach targeting machine-parsable specification may be the only viable option when source code is not available. However, the most likely use case for such a tool is when developers would willingly convert their REST APIs to GraphQL. In such cases, source code is available and type safety must be a priority (e.g., enterprise microservices). Our evaluation shows both the feasibility and the superiority of employing a static-analysis-based approach for type-preserving conversion.

\vspace{-2pt}
\subsection{Framework Extensibility}
\textit{GraphQLify} approach is designed to be extensible for various programming languages and API frameworks. 
First, the plugin-based architecture of \textit{GraphQLify} is analogous to the GraphQL ecosystem, which has proven successful despite the substantial tooling required to support various languages and platforms. Second, the \textit{GraphQLify}'s \textit{Translator} module is programming-language-agnostic and does not require modifications to support new languages. While a \textit{Generator} must be implemented to support a new language, all popular languages have numerous mature tools to expedite the development of new generators. For example,  we can adapt the static analysis capability of the \textit{GraphQLify} processor by utilizing stable static analysis tools supporting other programming languages such as C\#, Python, Go, or Ruby.
Although we need to build framework plugins to interpret Domain Specific Languages (DSLs) or frameworks commonly used to define APIs to handle the unique way each framework defines RESTful APIs and HTTP methods, the required efforts would be low. For example, the amount of code to develop our plugins for Spring, JAX-RS, and OpenAPI APIs was 205, 129, and 128 LOC.

\vspace{-2pt}
\subsection{Future Research Directions}
The GraphQL specification does not address authentication and authorization because it is designed to be a flexible query language and execution engine used in various contexts and with different types of data sources and recommends this aspect to be decided in the business logic layer rather than at resolvers~\cite{graphql_auth}.

\textit{GraphQLify}'s authorization leverages the ones already implemented in the APIs (e.g., REST) that it migrates. Future works may explore strategies to integrate Authentication and granular Authorization in \textit{GraphQLify}. For example, one can use custom annotations for authorization at the resolver level. A unified auth gateway protecting the data layer is another possible direction when multiple microservices may be using the same data source. Hence, it would be interesting to design a unified approach to handle all these diverse cases.

%% file: Sections/limitations.tex
\vspace{-2pt}
\subsection{Limitations} 
\label{sec:limitations}
\textit{(First)}, \textit{GraphQLify} uses static analysis of source code to generate a GraphQL schema. Therefore, known limitations of static analysis, such as failure to identify wildcard generic arguments, may lead to incomplete or inaccurate GraphQL schema or impact the generalizability of our findings.
\textit{(Second)}, \textit{GraphQLify}'s \code{non-strict} mode applies heuristics to mitigate conflicts. Since these rules are context-agnostic, they may not reflect the intended purposes and semantics of existing APIs. 
\textit{(Third)}, since our prototype implementation of \textit{GraphQLify} targets Java, its reported performance may not be generalizable to other programming languages.  Type inference using static analysis is challenging for programming languages using dynamic code generation, metaprogramming, or dynamic typing (e.g., JavaScript). Hence,  \textit{GraphQLify}'s approach may lead to an incomplete or incorrect schema for systems using such languages. Regardless, Java is the most popular choice to build enterprise applications~\cite{enterprise-list}, and  GraphQL is expected to see the most adoption in this spectrum~\cite{ibm_gartner_blog_post}. 
\textit{(Fourth)}, while our evaluation incorporates a substantial sample size of 834 APIs from a diverse group of nine OSS projects, it may not represent the full spectrum of real-world applications sufficient to draw robust statistical conclusions, especially when considering edge cases and less common API patterns.
Finally, \textit{(Fifth)}, our evaluation demonstrates how \textit{GraphQLify}-generated APIs overcome the underfetching problem, but does not illustrate their effectiveness in addressing overfetching. This is because the overfetching problem is inherently client-dependent, occurring only when the server returns more data than a client requires. Since GraphQL queries give clients fine-grained control over requested fields, overfetching is mitigated by design; therefore, we did not conduct additional experiments to demonstrate it.

%% file: Sections/conclusion.tex
\section{Conclusion} \label{sec:conclusion}
This paper introduces \textit{GraphQLify}, an automated framework for facilitating the migration of existing APIs to GraphQL. While prior works in this area take relational databases, resource description frameworks (RDF), or machine-parsable API specifications as inputs, \textit{GraphQLify} takes a new approach by leveraging static source code analysis.  During our evaluation with 834 APIs from nine popular OSS projects, GraphQLify successfully converts all, while matching types from the original APIs.
While OASGraph, the current state-of-the-art, had 3.5\% failure rates and 42\% type mismatches in our evaluation, both numbers are at 0\% for GraphQLify.
 For application workflows requiring five sequential API calls, clients can reduce data fetching time by a factor of 2 to 4 by using the GraphQLify-generated APIs over REST counterparts.

%% file: references.bib
@article{DBLP:journals/loplas/Landi92,
  title={Undecidability of static analysis}, volume={1}, ISSN={1557-7384}, url={http://dx.doi.org/10.1145/161494.161501}, DOI={10.1145/161494.161501}, number={4}, journal={ACM Letters on Programming Languages and Systems}, publisher={Association for Computing Machinery (ACM)}, author={Landi, William}, year={1992}, month=dec, pages={323–337}
}

@article{jaxrs,
  title={Jax-rs: Java™ api for restful web services},
  author={Pericas-Geertsen, Santiago and Potociar, Marek},
  journal={Oracle Corporation},
  pages={1--84},
  year={2013}
}

@inproceedings{DBLP:conf/uss/MachirySCSKV17,
  author       = {Aravind Machiry and
                  Chad Spensky and
                  Jake Corina and
                  Nick Stephens and
                  Christopher Kruegel and
                  Giovanni Vigna},
  editor       = {Engin Kirda and
                  Thomas Ristenpart},
  title        = {{DR.} {CHECKER:} {A} Soundy Analysis for Linux Kernel Drivers},
  booktitle    = {26th {USENIX} Security Symposium, {USENIX} Security 2017, Vancouver,
                  BC, Canada, August 16-18, 2017},
  pages        = {1007--1024},
  publisher    = {{USENIX} Association},
  year         = {2017}
}

@inproceedings{DBLP:conf/ccs/RahamanXASTFKY19,
series={CCS ’19}, title={CryptoGuard: High Precision Detection of Cryptographic Vulnerabilities in Massive-sized Java Projects}, url={http://dx.doi.org/10.1145/3319535.3345659}, DOI={10.1145/3319535.3345659}, booktitle={Proceedings of the 2019 ACM SIGSAC Conference on Computer and Communications Security}, publisher={ACM}, author={Rahaman, Sazzadur and Xiao, Ya and Afrose, Sharmin and Shaon, Fahad and Tian, Ke and Frantz, Miles and Kantarcioglu, Murat and Yao, Danfeng (Daphne)}, year={2019}, month=nov, pages={2455–2472}, collection={CCS ’19}
  }

@inproceedings{mcfadden2024wendigo,
  title={WENDIGO: Deep Reinforcement Learning for Denial-of-Service Query Discovery in GraphQL}, url={http://dx.doi.org/10.1109/spw63631.2024.00012}, DOI={10.1109/spw63631.2024.00012}, booktitle={2024 IEEE Security and Privacy Workshops (SPW)}, publisher={IEEE}, author={McFadden, Shae and Maugeri, Marcello and Hicks, Chris and Mavroudis, Vasilios and Pierazzi, Fabio}, year={2024}, month=may, pages={68–75} 
}

@inproceedings{mavroudeas-ASE-2021,
  title={Learning GraphQL Query Cost}, url={http://dx.doi.org/10.1109/ase51524.2021.9678513}, DOI={10.1109/ase51524.2021.9678513}, booktitle={2021 36th IEEE/ACM International Conference on Automated Software Engineering (ASE)}, publisher={IEEE}, author={Mavroudeas, Georgios and Baudart, Guillaume and Cha, Alan and Hirzel, Martin and Laredo, Jim A. and Magdon-Ismail, Malik and Mandel, Louis and Wittern, Erik}, year={2021}, month=nov, pages={1146–1150} 
}

@inproceedings{cha-FSE-2020,
series={ESEC/FSE ’20}, 
title={A principled approach to GraphQL query cost analysis}, 
url={http://dx.doi.org/10.1145/3368089.3409670}, 
DOI={10.1145/3368089.3409670}, 
booktitle={Proceedings of the 28th ACM Joint Meeting on European Software Engineering Conference and Symposium on the Foundations of Software Engineering}, 
publisher={ACM}, 
author={Cha, Alan and Wittern, Erik and Baudart, Guillaume and Davis, James C. and Mandel, Louis and Laredo, Jim A.}, 
year={2020}, 
month=nov, 
pages={257–268}, 
collection={ESEC/FSE ’20}
}

@online{openapi, 
    title={OpenAPI Inititive}, 
    url={https://www.openapis.org/}, 
    lastaccessed={2026-04-08}
 }

@online{swagger, 
    title={Swagger, API Development forEveryone}, 
    url={https://swagger.io/}, lastaccessed={2026-04-08}
}

@online{springrest, 
    title={Building REST services with Spring}, 
    url={https://spring.io/guides/tutorials/rest}, lastaccessed={2026-04-08}
}

@online{maven, 
    title={Welcome to Apache Maven}, 
    url={https://maven.apache.org/}, lastaccessed={2026-04-08}
}

@online{postgraphile, 
    title={PostGraphile, Instant GraphQL API}, 
    url={https://www.graphile.org/postgraphile/}, lastaccessed={2026-04-08}
}

@online{hasura, 
    title={Hasura, Instant GraphQL APIs on your data}, 
    url={https://hasura.io/}, lastaccessed={2026-04-08}
}

@online{dgraph, 
    title={DGraph}, 
    url={https://dgraph.io/}, lastaccessed={2026-04-08}
}

@online{graphqlmesh, 
    title={GraphQL-Mesh}, 
    url={https://the-guild.dev/graphql/mesh}, lastaccessed={2026-04-08}
}

@online{prisma, 
    title={Prisma}, 
    url={https://www.prisma.io/}, lastaccessed={2026-04-08}
}

@online{javaAnnotationProcessor, 
    title={Package javax.annotation.processing}, 
    url={https://docs.oracle.com/javase/8/docs/api/javax/annotation/processing/package-summary.html}, lastaccessed = {2026-04-08}
}

@online{ibm_gartner_blog_post, 
    title={Seven key insights on GraphQL trends}, 
    author ={Derks, Roy},
    howpublished={https://www.ibm.com/blog/seven-key-insights-on-graphql-trends/},
    year={2023}, lastaccessed = {2026-04-08}
}

@online{graphql_auth, 
    title={GraphQL Authorization}, 
    url={https://graphql.org/learn/authorization/}, lastaccessed = {2026-04-08}
}

@online{graphql_spec, 
    title={GraphQL Specification}, 
    year={2021},
    url={https://spec.graphql.org/October2021/},lastaccessed = {2026-04-08}
}

@online{graphql_java, 
    title={GraphQL Java - The Java implementation of GraphQL}, 
    url={https://www.graphql-java.com/}, lastaccessed = {2026-04-08}
}

@inproceedings{graphqlwrappers,
  title={Generating GraphQL-Wrappers for REST(-like) APIs}, ISBN={9783319916620}, ISSN={1611-3349}, url={http://dx.doi.org/10.1007/978-3-319-91662-0_5}, DOI={10.1007/978-3-319-91662-0_5}, booktitle={Web Engineering}, publisher={Springer International Publishing}, author={Wittern, Erik and Cha, Alan and Laredo, Jim A.}, year={2018}, pages={65–83}
}

@inproceedings{controlledexperiment,
   title={REST vs GraphQL: A Controlled Experiment}, url={http://dx.doi.org/10.1109/icsa47634.2020.00016}, DOI={10.1109/icsa47634.2020.00016}, booktitle={2020 IEEE International Conference on Software Architecture (ICSA)}, publisher={IEEE}, author={Brito, Gleison and Valente, Marco Tulio}, year={2020}, month=mar, pages={81–91} 
}

@InProceedings{wittern-2019,
title={An Empirical Study of GraphQL Schemas}, ISBN={9783030337025}, ISSN={1611-3349}, url={http://dx.doi.org/10.1007/978-3-030-33702-5_1}, DOI={10.1007/978-3-030-33702-5_1}, booktitle={Service-Oriented Computing}, publisher={Springer International Publishing}, author={Wittern, Erik and Cha, Alan and Davis, James C. and Baudart, Guillaume and Mandel, Louis}, year={2019}, pages={3–19}
}

@inproceedings{lambers2024taint,
  title={Taint Analysis for Graph APIs Focusing on Broken Access Control}, volume={Volume 22, Issue 1}, ISSN={1860-5974}, url={http://dx.doi.org/10.46298/lmcs-22(1:18)2026}, DOI={10.46298/lmcs-22(1:18)2026}, journal={Logical Methods in Computer Science}, publisher={Centre pour la Communication Scientifique Directe (CCSD)}, author={Lambers, Leen and Sakizloglou, Lucas and Khakharova, Taisiya and Orejas, Fernando}, year={2026}, month=mar
}

@article{belhadi2024random,
  title={Random Testing and Evolutionary Testing for Fuzzing GraphQL APIs}, volume={18}, ISSN={1559-114X}, url={http://dx.doi.org/10.1145/3609427}, DOI={10.1145/3609427}, number={1}, journal={ACM Transactions on the Web}, publisher={Association for Computing Machinery (ACM)}, author={Belhadi, Asma and Zhang, Man and Arcuri, Andrea}, year={2024}, month=jan, pages={1–41}
}

@inproceedings{zetterlund2022harvesting,
  title={Harvesting Production GraphQL Queries to Detect Schema Faults}, url={http://dx.doi.org/10.1109/icst53961.2022.00014}, DOI={10.1109/icst53961.2022.00014}, booktitle={2022 IEEE Conference on Software Testing, Verification and Validation (ICST)}, publisher={IEEE}, author={Zetterlund, Louise and Tiwari, Deepika and Monperrus, Martin and Baudry, Benoit}, year={2022}, month=apr, pages={365–376}
}

@inproceedings{roksela2020evaluating,
   title={Evaluating execution strategies of GraphQL queries}, url={http://dx.doi.org/10.1109/tsp49548.2020.9163501}, DOI={10.1109/tsp49548.2020.9163501}, booktitle={2020 43rd International Conference on Telecommunications and Signal Processing (TSP)}, publisher={IEEE}, author={Roksela, Piotr and Konieczny, Marek and Zielinski, Slawomir}, year={2020}, month=jul, pages={640–644}
}

@online{graphqlfoundation, 
    title={GraphQL Foundation}, 
    url={https://graphql.org/community/foundation/},
    lastaccessed = {2026-04-08}
}

@online{stepzen, 
    title={StepZen, an IBM Company}, 
    url={https://dashboard.ibm.stepzen.com/},
    lastaccessed = {2026-04-08}
}

@online{spring, 
    title={Spring Framework Documentation}, 
    url={https://docs.spring.io/spring-framework/reference/index.html},
    lastaccessed = {2026-04-08}
}

@online{swaggertographal, 
    title={Swagger-to-GraphQL}, 
    url={https://github.com/yarax/swagger-to-graphql},
    lastaccessed = {2026-04-08}
}

@online{apollo, 
    title={Apollo GraphQL}, 
    url={https://www.apollographql.com/}
}

@inproceedings{graphqlformaldef2,
  series={WWW ’18}, title={Semantics and Complexity of GraphQL}, url={http://dx.doi.org/10.1145/3178876.3186014}, DOI={10.1145/3178876.3186014}, booktitle={Proceedings of the 2018 World Wide Web Conference on World Wide Web  - WWW ’18}, publisher={ACM Press}, author={Hartig, Olaf and Pérez, Jorge}, year={2018}, pages={1155–1164}, collection={WWW ’18}
}

@inproceedings{farre2019graphql,
  title={GraphQL Schema Generation for Data-Intensive Web APIs}, ISBN={9783030320652}, ISSN={1611-3349}, url={http://dx.doi.org/10.1007/978-3-030-32065-2_13}, DOI={10.1007/978-3-030-32065-2_13}, booktitle={Model and Data Engineering}, publisher={Springer International Publishing}, author={Farré, Carles and Varga, Jovan and Almar, Robert}, year={2019}, pages={184–194}
}

@inproceedings{taelman2018graphql,
  title={GraphQL-LD: linked data querying with GraphQL},
  author={Taelman, Ruben and Vander Sande, Miel and Verborgh, Ruben},
  booktitle={ISWC2018, the 17th International Semantic Web Conference},
  pages={1--4},
  year={2018}
}

@inproceedings{ugql,
  title={Automatic bootstrapping of GraphQL endpoints for RDF triple stores},
  author={Gleim, Lars Christoph and Holzheim, Tim and Koren, Istv{\'a}n and Decker, Stefan},
   booktitle={International Semantic Web Conference (ISWC) 2020},
  pages={119--134},
  year={2020}
}

@inproceedings{evolutionarytesting,
 series={GECCO ’22}, title={Evolutionary-based automated testing for GraphQL APIs}, url={http://dx.doi.org/10.1145/3520304.3528952}, DOI={10.1145/3520304.3528952}, booktitle={Proceedings of the Genetic and Evolutionary Computation Conference Companion}, publisher={ACM}, author={Belhadi, Asma and Zhang, Man and Arcuri, Andrea}, year={2022}, month=jul, pages={778–781}, collection={GECCO ’22}
}

@article{ontology,
 title={Ontology-based GraphQL server generation for data access and data integration}, volume={15}, ISSN={2210-4968}, url={http://dx.doi.org/10.3233/sw-233550}, DOI={10.3233/sw-233550}, number={5}, journal={Semantic Web: – Interoperability, Usability, Applicability}, publisher={SAGE Publications}, author={Li, Huanyu and Hartig, Olaf and Armiento, Rickard and Lambrix, Patrick}, year={2024}, month=jan, pages={1639–1675}
}

@inproceedings{seifer2019empirical,
 series={SLE ’19}, title={Empirical study on the usage of graph query languages in open source Java projects}, url={http://dx.doi.org/10.1145/3357766.3359541}, DOI={10.1145/3357766.3359541}, booktitle={Proceedings of the 12th ACM SIGPLAN International Conference on Software Language Engineering}, publisher={ACM}, author={Seifer, Philipp and Härtel, Johannes and Leinberger, Martin and Lämmel, Ralf and Staab, Steffen}, year={2019}, month=oct, pages={152–166}, collection={SLE ’19}
}

@online{altair,
    url ={https://altairgraphql.dev/} ,
    title = {Altair GraphQL Client},
    lastaccessed={2026-04-01}
}

@article{yazdipour2020github,
  title={GitHub data exposure and accessing blocked data using the GraphQL security design flaw},
  author={Yazdipour, Shahriar},
  journal={arXiv preprint arXiv:2005.13448},
  year={2020}
}

@misc{enterprise-list,
    title = {Enterprise Programming Languages for 2024},
    howpublished = {https://tridenstechnology.com/enterprise-programming-languages/},
   year={2024},
   author={Lesjak, Ziga },
}

@inproceedings{migratingtographql_apracticalassessment,
  title={Migrating to GraphQL: A Practical Assessment}, url={http://dx.doi.org/10.1109/saner.2019.8667986}, DOI={10.1109/saner.2019.8667986}, booktitle={2019 IEEE 26th International Conference on Software Analysis, Evolution and Reengineering (SANER)}, publisher={IEEE}, author={Brito, Gleison and Mombach, Thais and Valente, Marco Tulio}, year={2019}, month=feb, pages={140–150}
}

@inproceedings{replacerest,
  title={Can GraphQL Replace REST? A Study of Their Efficiency and Viability}, url={http://dx.doi.org/10.1109/ser-ip52554.2021.00009}, DOI={10.1109/ser-ip52554.2021.00009}, booktitle={2021 IEEE/ACM 8th International Workshop on Software Engineering Research and Industrial Practice}, publisher={IEEE}, author={Vadlamani, Sri Lakshmi and Emdon, Benjamin and Arts, Joshua and Baysal, Olga}, year={2021}, month=jun
}

@article{quina2023systematicmapping,
 title={GraphQL: A Systematic Mapping Study}, volume={55}, ISSN={1557-7341}, url={http://dx.doi.org/10.1145/3561818}, DOI={10.1145/3561818}, number={10}, journal={ACM Computing Surveys}, publisher={Association for Computing Machinery (ACM)}, author={Quiña-Mera, Antonio and Fernandez, Pablo and García, José María and Ruiz-Cortés, Antonio}, year={2023}, month=feb, pages={1–35} 
}

@article{morphgraphql,
 title={Exploiting Declarative Mapping Rules for Generating GraphQL Servers with Morph-GraphQL}, volume={30}, ISSN={1793-6403}, url={http://dx.doi.org/10.1142/s0218194020400070}, DOI={10.1142/s0218194020400070}, number={06}, journal={International Journal of Software Engineering and Knowledge Engineering}, publisher={World Scientific Pub Co Pte Lt}, author={Chaves-Fraga, David and Priyatna, Freddy and Alobaid, Ahmad and Corcho, Oscar}, year={2020}, month=jun, pages={785–803}
}

@inproceedings{quina2021quality,
  title={Quality in Use Evaluation of a GraphQL Implementation}, ISBN={9783030960438}, ISSN={2367-3389}, url={http://dx.doi.org/10.1007/978-3-030-96043-8_2}, DOI={10.1007/978-3-030-96043-8_2}, booktitle={Emerging Research in Intelligent Systems}, publisher={Springer International Publishing}, author={Quiña-Mera, Antonio and Fernández-Montes, Pablo and García, José María and Bastidas, Edwin and Ruiz-Cortés, Antonio}, year={2022}, pages={15–27}
}

@article{spoon2006,
 title={<scp>SPOON</scp>: A library for implementing analyses and transformations of Java source code: SPOON}, volume={46}, ISSN={0038-0644}, url={http://dx.doi.org/10.1002/spe.2346}, DOI={10.1002/spe.2346}, number={9}, journal={Software: Practice and Experience}, publisher={Wiley}, author={Pawlak, Renaud and Monperrus, Martin and Petitprez, Nicolas and Noguera, Carlos and Seinturier, Lionel}, year={2015}, month=aug, pages={1155–1179}
}
